\documentclass[letterpaper,11pt]{article}
\usepackage[utf8]{inputenc}

\usepackage{amsmath, amsthm, amssymb, thm-restate}
\usepackage{dsfont}
\usepackage{algorithmicx}
\usepackage[table,xcdraw]{xcolor}
\usepackage[ruled,vlined,linesnumbered]{algorithm2e}
\usepackage{setspace}
\usepackage{mathtools}
\usepackage{pgf}
\usepackage{ifthen} 

\usepackage[style=alphabetic,natbib=true,maxnames=99,maxalphanames=5]{biblatex}
\usepackage{comment} 
\usepackage[most]{tcolorbox}   
\usepackage{xfrac}
\usepackage{hyperref}
\usepackage{multirow}
\usepackage{caption}
\usepackage{bm}
\usepackage{newfloat}
\usepackage{enumitem}
\usepackage{wrapfig}
\usepackage{dsfont}
\usepackage{multirow}
\usepackage{etoolbox}          
\usepackage{fancyhdr}
\usepackage{csquotes}
\usepackage[noabbrev]{cleveref}
\usepackage{tabularx}
\usepackage[table]{xcolor} 

\usepackage[margin=1in]{geometry}

\allowdisplaybreaks

\definecolor{mygreen}{RGB}{10,110,230}
\definecolor{myred}{RGB}{10,110,230}

\hypersetup{
     colorlinks=true,
     citecolor=mygreen,
     linkcolor=myred
}

\renewcommand{\epsilon}{\varepsilon}

\newcommand{\sfcomment}[1]{}
\newcommand{\sbcomment}[1]{}
\newcommand{\aacomment}[1]{}

\newcommand{\hiddencomment}[1]{}

\newcommand{\mc}[1]{\ensuremath{\mathcal{#1}}}

\newcounter{protocolcounter}
 
\crefname{protocolcounter}{Algorithm}{Algorithms}

\newcommand{\pos}{\operatorname{pos}}

\newcommand{\val}[1]{\mathsf{val}_{#1}}
\newcommand{\vallp}[1]{\mathsf{val}^{\mathsf{LP}}_{#1}}
\newcommand{\aloc}{\mathcal{A}_{\mathsf{loc}}}

\newcommand{\B}{\textbf{B}}
\newcommand{\C}{\textbf{C}}

\newcommand{\Ex}{\mathbb{E}}
\newcommand{\Tmax}{T_{max}}

\newcommand{\Var}{\operatorname{Var}}
\newcommand{\Cov}{\operatorname{Cov}}
\newcommand{\Out}{\mathsf{Out}}

\newcommand{\card}[1]{|#1|}

\newcommand{\degs}{\mathsf{degs}}

\newcommand{\dest}{\widetilde{d}}
\newcommand{\total}{\mathsf{total}}
\newcommand{\StreamingReduction}{\mathsf{StreamingReduction}}
\newcommand{\Aggregate}{\mathsf{Aggregate}}
\newcommand{\InitReservoir}{\mathsf{InitializeReservoir}}
\newcommand{\UpdateReservoir}{\mathsf{UpdateReservoir}}

\newcommand{\on}{\mathsf{on}}
\newcommand{\off}{\mathsf{off}}
\newcommand{\eadv}{\epsilon_{\mathsf{adv}}}
\newcommand{\bounded}{\mathsf{bdd}}

\newcommand{\maxdicut}[0]{\text{\textsc{Max-DiCut}}}
\newcommand{\CSP}[0]{\textsc{CSP}}
\newcommand{\maxCSP}[0]{\textsc{Max-CSP}}
\newcommand{\BasicLP}{\textsc{BasicLP}}
\newcommand{\LOCAL}{\ensuremath{\mathsf{LOCAL}}}

\newcommand{\StreamingEstimator}{\mathsf{StreamingEstimator}}
\newcommand{\Gest}{\widetilde{G}}

\newcommand{\valest}{\ensuremath{\widetilde{\val{I}}}}
\newcommand{\one}{\mathds{1}}
\SetKwRepeat{Do}{do}{while}

\DeclarePairedDelimiter{\paren}{\lparen}{\rparen}
\DeclarePairedDelimiter{\set}{\lbrace}{\rbrace}

\DeclarePairedDelimiter{\bracket}{[}{]}
\DeclarePairedDelimiter{\abs}{\lvert}{\rvert}

\SetKwProg{WithProb}{with probability}{:}{}

\DeclareMathOperator{\poly}{poly}

\renewcommand{\epsilon}[0]{\ensuremath{\varepsilon}}

\let\originalleft\left
\let\originalright\right
\renewcommand{\left}{\mathopen{}\mathclose\bgroup\originalleft}
\renewcommand{\right}{\aftergroup\egroup\originalright}

\newtheorem{theorem}{Theorem}

\newtheorem{lemma}{Lemma}[section]

\newtheorem{proposition}[lemma]{Proposition}

\newtheorem{conj}{Conjecture}
\newtheorem{definition}[lemma]{Definition}
\newtheorem{claim}[lemma]{Claim}

\newtheorem{observation}[lemma]{Observation}

\makeatletter
\def\thm@space@setup{%
  \thm@preskip= 0.2cm
  \thm@postskip=\thm@preskip 
}
\makeatother

\crefname{lemma}{Lemma}{Lemmas}
\crefname{theorem}{Theorem}{Theorems}
\crefname{property}{Property}{Properties}
\crefname{claim}{Claim}{Claims}
\crefname{corollary}{Corollary}{Corollaries}
\crefname{result}{Result}{Results}
\crefname{conj}{Conjecture}{Conjectures}
\crefname{definition}{Definition}{Definitions}
\crefname{observation}{Observation}{Observations}
\crefname{proposition}{Proposition}{Propositions}
\crefname{assumption}{Assumption}{Assumptions}
\crefname{line}{Line}{Lines}
\crefname{figure}{Figure}{Figures}
\creflabelformat{property}{(#1)#2#3}
\crefname{equation}{}{}
\crefname{section}{Section}{Sections}
\crefname{appendix}{Appendix}{Appendices}
\crefname{problem}{Problem}{Problems}
\crefname{algorithm}{Algorithm}{Algorithms}

\definecolor{mygreen}{RGB}{20,155,20}
\definecolor{myred}{RGB}{195,20,20}
\definecolor{linkcolor}{RGB}{0,0,230}
\definecolor{mylightgray}{RGB}{230,230,230}
\definecolor{verylightgray}{RGB}{240,240,240}
\definecolor{commentcolor}{RGB}{120,120,120}

\SetCommentSty{mycommfont}

\hypersetup{
     colorlinks=true,
     citecolor= mygreen,
     linkcolor= myred,
     urlcolor= mygreen
}

\renewcommand{\mc}[1]{\ensuremath{\mathcal{#1}}}

\newcounter{myalgctr}

\newenvironment{tbox}{
\par\addvspace{0.2cm}
\begin{tcolorbox}[width=\textwidth,
                  boxsep=2pt,
                  left=1pt,
                  right=1pt,
                  top=4pt,
                  boxrule=1pt,
                  arc=0pt,
                  colback=white,
                  colframe=black
                  ]
}{
\end{tcolorbox}
}

\newenvironment{tboxh}{
\par\addvspace{0.2cm}
\begin{tcolorbox}[width=\textwidth,
                  boxsep=2pt,
                  left=1pt,
                  right=1pt,
                  top=4pt,
                  boxrule=1pt,
                  arc=0pt,
                  colback=white,
                  colframe=black,
                  float=t
                  ]
}{
\end{tcolorbox}
}

\newenvironment{graytbox}{
\par\addvspace{0.1cm}
\begin{tcolorbox}[width=\textwidth,
                  frame hidden,
                  boxsep=5pt,
                  left=1pt,
                  right=1pt,
                  top=2pt,
                  bottom=2pt,
                  boxrule=1pt,
                  arc=0pt,
                  colback=mylightgray,
                  colframe=black,
                  breakable
                  ]
}{
\end{tcolorbox}
}

\newcommand{\tboxhrule}[0]{\vspace{0.1cm} \hrule \vspace{0.2cm}}

\newenvironment{titledtbox}[1]{\begin{tbox}#1 \tboxhrule}{\end{tbox}}
\newenvironment{titledtboxh}[1]{\begin{tboxh}#1 \tboxhrule}{\end{tboxh}}

\definecolor{mylightblue}{RGB}{225,242,250}
\definecolor{myblueborder}{RGB}{120,180,210}

\newtcolorbox{lightbluebox}[1][]{
  colback=mylightblue,
  colframe=myblueborder,
  coltitle=black,
  boxrule=1.2pt,
  arc=0mm,
  left=6pt,
  right=6pt,
  top=6pt,
  bottom=6pt,
  fonttitle=\bfseries,
  title=#1
}

\makeatletter
\renewcommand{\paragraph}{%
  \@startsection{paragraph}{4}%
  {\z@}{10pt}{-1em}%
  {\normalfont\normalsize\bfseries}%
}
\makeatother

\makeatletter
\patchcmd{\@algocf@start}{-1.5em}{0pt}{}{}
\makeatother

\title{Single-Pass Streaming CSPs via Two-Tier Sampling}

\author{
Amir Azarmehr\thanks{Northeastern University. Emails: \texttt{\{azarmehr.a, s.behnezhad, ferrante.s\}@northeastern.edu}. This work was supported in part by NSF CAREER Award CCF-2442812, Soheil Behnezhad's Alfred P. Sloan Fellowship, and a Google
Research Award.} \and
Soheil Behnezhad\footnotemark[1]  \and
Shane Ferrante\footnotemark[1]
}

\begin{document}

\date{}

\maketitle

\thispagestyle{empty}

\begin{abstract}
{
\setlength{\parskip}{0.5em}
We study the maximum constraint satisfaction problem, \textsc{Max-CSP}, in the streaming setting. Given $n$ variables, the constraints arrive sequentially in an arbitrary order, with each constraint involving only a small subset of the variables. The objective is to approximate the maximum fraction of constraints that can be satisfied by an optimal assignment in a single pass. The problem admits a trivial near-optimal solution with $O(n)$ space, so the major open problem in the literature has been the best approximation achievable when limiting the space to $o(n)$.

The answer to the question above depends heavily on the \textsc{CSP} instance at hand. The integrality gap $\alpha$ of an LP relaxation, known as the \textsc{BasicLP}, plays a central role. In particular, a major conjecture of the area is that in the single-pass streaming setting, for any fixed $\varepsilon > 0$,
\begin{enumerate}[label=($\roman*$), topsep=5pt]
    \item an $(\alpha-\varepsilon)$-approximation can be achieved with $o(n)$ space, and that
    \item any $(\alpha+\varepsilon)$-approximation requires $\Omega(n)$ space.
\end{enumerate}

Despite significant progress, both sides of the conjecture remain open. Under the constant degree assumption \cite[arXiv'25]{SingerTV25} or by allowing multiple passes \cite[STOC'26]{FMW25dichotomy}, the first part has been positively resolved. The latter work also gives the best known space lower bound of $\Omega(n^{1/3})$ for the second part of the conjecture. We also note that for the special case of \textsc{Max-DiCut}, where the integrality gap is $\alpha = 1/2$, both sides of the conjecture have been resolved due to the works of \cite[STOC'26]{AzarmehrBFS26} and \cite[STOC'19]{KapralovK19}.

In this work, we fully resolve the first side of the conjecture by proving that an $(\alpha - \varepsilon)$-approximation of \textsc{Max-CSP} can indeed be achieved using $n^{1-\Omega_\varepsilon(1)}$ space and in a single pass.  Given that \textsc{Max-DiCut} is a special case of \textsc{Max-CSP}, our algorithm fully recovers the recent result of \cite[STOC'26]{AzarmehrBFS26} via a completely different algorithm and proof. On a technical level, our algorithm simulates a suitable local algorithm on a reduced graph using a technique that we call {\em two-tier sampling}: the algorithm combines both edge sampling and vertex sampling to handle high- and low-degree vertices at the same time.
}
\end{abstract}

{
\newpage
\tableofcontents
\thispagestyle{empty}
}

\clearpage
\setcounter{page}{1}

\section{Introduction}

\emph{Constraint satisfaction problems} (\CSP s) concern assigning values from a finite domain to $n$ variables so as to maximize the number of satisfied constraints, where each constraint is specified on a small subset of the variables.
They serve as a foundational abstraction for a vast class of combinatorial problems, unifying many prominent optimization problems, including maximum cut for directed and undirected graphs, maximum $q$-colorability, the Unique Games problem, etc.
Their study has been instrumental in major developments of algorithmic and complexity-theoretic tools, including semidefinite approximation algorithms, the PCP theorem, the Unique Games conjecture, and modern analytic methods for Boolean functions. In this paper, we study space-efficient algorithms for CSPs particularly in the streaming setting.

Formally, for an integer $k > 0$ and an alphabet $\Sigma$, \emph{a maximum constraint satisfaction problem}, $\CSP(\mc{F})$, is defined by a set of predicates $\mc{F} \subseteq \Sigma^k \to \{0, 1\}$.
An instance $I = (V, F)$ is given by $m$ constraints $F$ from the set of predicates $\mc{F}$ over $n$ variables $V$. The value of an assignment $\sigma \in \Sigma^n$ is equal to the fraction of constraints it satisfies, and is denoted by $\val{I}(\sigma) \in [0, 1]$.
The maximum constraint satisfaction problem, $\maxCSP(\mc{F})$, asks to compute or approximate the value of the optimal assignment $\val{I} := \max_{\sigma} \val{I}(\sigma)$.

We study approximating \maxCSP{} in the streaming setting, which has received extensive attention over the past decade \cite{KapralovKS15,GuruswamiVV17,KapralovK19,ChouGV20,AssadiKSY20,AssadiN21,ChouGSVV22,SaxenaSSV23FOCS,HwangSV24,ChouGSV24,SaxenaSSV25,FeiMW25FOCS,FMW25dichotomy,velusamy2025near,SingerTV25,AzarmehrBFS26}. We particularly refer interested readers to the excellent survey of \citet*{Sudan22} on this problem and its importance.
In the streaming setting, the constraints arrive in an arbitrary (e.g., adversarial chosen) order and must be processed by an algorithm with limited space that ideally takes one pass over this stream.
The interesting space regime is $o(n)$, since randomly storing $O(n/\epsilon^2)$ constraints easily yields a $(1 - \epsilon)$-approximation, rendering linear-memory approximations trivial.\footnote{This result is folklore; see e.g. \cref{claim:linear-edges} in the present paper.}

A standard approach to approximating \maxCSP{} is via LP relaxations.
For any specific $\CSP(\mc{F})$, a natural relaxation, canonically referred to as the \BasicLP{}, has been shown to be essentially optimal amongst polynomially-sized LPs \cite{GhoshT17,ChanLRS16,KothariMR22}.
The integrality gap of the \BasicLP{}, hereafter denoted by $\alpha_{\mc{F}}$, is closely linked with the efficient approximability of $\maxCSP{(\mc{F})}$ across multiple models. In particular, a major conjecture of the area is the following dichotomy for single-pass streaming algorithms:

\begin{lightbluebox}
\begin{conj}[\textbf{Streaming Dichotomy}]\label{conj}
For any given $\CSP(\mc{F})$ and every fixed $\epsilon > 0$, 
\begin{itemize}[itemsep=5pt, topsep=10pt, leftmargin=10pt]
    \item an $(\alpha_{\mc{F}} - \epsilon)$-approximation of $\maxCSP(\mc{F})$ can be obtained with $o(n)$ space in a single pass,
    \item and an $(\alpha_{\mc{F}} + \epsilon)$-approximation of $\maxCSP(\mc{F})$ in a single pass requires $\Omega(n)$ space.
\end{itemize}
\end{conj}
\end{lightbluebox}

\Cref{conj} has been mentioned implicitly and explicitly in several works in the literature \cite{Sudan22,FMW25dichotomy,SingerTV25}. See for example \cite{SingerTV25} which is particularly about this conjecture.

While both sides of \Cref{conj} remain open, some major progress has been made towards proving it. For example, under the constant degree assumption (i.e., each variable is involved in a constant number of constraints) \citet{SingerTV25} fully resolved the first part. Additionally, if instead of a single pass, one allows multiple passes the recent algorithm of \citet{FMW25dichotomy} substantially resolves the first, leaving the single-pass case open:
\begin{quote}
\emph{\enquote{[...] it remains largely mysterious whether CSP approximation exhibits a dichotomy behavior in the single-pass model as well.}} \cite{FMW25dichotomy}
\end{quote}

On the lower bound side, the work of \cite{FMW25dichotomy} shows $\Omega(n^{1/3}/p)$ space is needed for an $(\alpha_{\mc{F}}+ \epsilon)$-approximation in $p$ passes. While not quite linear in $n$, this already implies a dichotomy for multi-pass algorithms since their algorithm uses only polylogarithmic space for an $(\alpha_{\mc{F}}-\epsilon)$-approximation.

It is also worth noting that for the spacial case of \textsc{Max-DiCut}, for which the integrality gap is $\alpha_{\mc{F}} = 1/2$, both sides of the conjecture have been fully resolved due to the recent upper bound of \citet{AzarmehrBFS26} that takes $n^{1-\Omega_\epsilon(1)}$ space in a single-pass, and an earlier lower bound of $\Omega(n)$ due to \citet{KapralovK19}.

\paragraph{Our Contribution:} Our main contribution in this work is to completely resolve the first part of \Cref{conj} by proving the following theorem:
\begin{graytbox}
\begin{theorem}\label{thm:main}
For an instance $\CSP(\mc{F})$, let $\alpha_{\mc{F}}$ denote the integrality gap of the corresponding \linebreak \BasicLP{}. Then for any $\epsilon > 0$, there is a randomized single-pass streaming algorithm that w.h.p. $(\alpha_{\mc{F}} - \epsilon)$-approximates $\maxCSP{(\mc{F})}$ using $n^{1 - \Omega_\epsilon(1)}$ space.
\end{theorem}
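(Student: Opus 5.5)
Our plan has three stages: build a local algorithm on bounded-degree instances, reduce arbitrary instances to that regime, and simulate the resulting local algorithm in a single pass by two-tier sampling.

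\textbf{Local algorithm.} We start from the bounded-degree result of \citet{SingerTV25}. For any fixed $\epsilon$ and degree bound $D$, an $(\alpha_{\mc{F}}-\epsilon)$-approximation of $\maxCSP(\mc{F})$ on instances of maximum degree $D$ is obtained by a \LOCAL{}-type procedure. Each variable explores its radius-$R$ neighborhood, with $R=R(\epsilon,D)$, solves (approximately) a local \BasicLP{}, and rounds. The estimate is then the average, over constraints, of a quantity depending only on the radius-$R$ ball around the constraint. We would first restate this as an estimator $\valest = \Ex_{C\in F}[g(B_R(C))]$ for some bounded function $g$. The key property is that $\valest \in [\alpha_{\mc{F}}\val{I}-\epsilon,\ \val{I}]$ for every instance of degree at most $D$.

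\textbf{Reduction to the reduced graph.} Real instances have unbounded degrees, so we build a reduced instance in two ways.
\begin{itemize}
\item \emph{High-degree variables} (degree above a threshold $\tau = n^{\delta}$) are handled by subsampling their incident constraints down to degree about $D$. The stream is edge-sampled at rate roughly $D/\tau$, and each high-degree variable keeps a reservoir of its sampled incident constraints.
\item \emph{Low-degree variables} are handled by vertex sampling. We sample a set $S$ of about $n^{1-\delta'}$ variables, and for each we store all its incident constraints explicitly, which is affordable since its degree is below $\tau$. This allows radius-$R$ exploration through low-degree parts.
\end{itemize}
We must show that replacing neighborhoods by these sparsified versions changes the \BasicLP{}-based estimator by at most $\epsilon$. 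Concretely, the bounded-degree reduced instance $I'$ (obtained by splitting high-degree variables or by sampling their constraints) should satisfy $\alpha_{\mc{F}}\val{I}-\epsilon \le \vallp{I'}$-style guarantees, up to rounding, and $\val{I'}\le \val{I}+\epsilon$. We would prove this by a standard concentration argument on the LP: for a variable with many constraints, a random sample of them preserves its local LP marginals.

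\textbf{Streaming simulation (the main obstacle).} The hard part is evaluating $g(B_R(C))$ for a random constraint $C$ in a single pass. The ball mixes both tiers, and we do not know in advance which variables are high degree. We would pick a random starting constraint and grow the ball level by level. Since in one pass we cannot adaptively query, we would instead run sampling at geometrically decreasing rates over $R$ levels, $n^{1-\delta_1}\gg n^{1-\delta_2}\gg\cdots$. Variables reached at level $i$ are, with constant probability, among the sampled set at level $i$. Degrees are estimated online (with a reservoir that is reset when a variable crosses the threshold) to decide which tier a variable falls in.

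To get enough successful explorations, we run $n^{\Omega(1)}$ independent copies and average them; Chernoff bounds then give an additive $\epsilon$ error w.h.p. Because $R$ and $D$ depend only on $\epsilon$, choosing the exponents $\delta_i$ with gaps of order $1/R$ keeps the total space at $n^{1-\Omega_\epsilon(1)}$. The delicate point is a coupling argument: the sampled neighborhood must be distributed (up to $\epsilon$ in total variation) like the neighborhood in the reduced instance. We expect this to require that the sampling randomness at high-degree variables be shared consistently across explorations, for example via hash functions.
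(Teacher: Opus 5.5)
\textbf{Gap 1: the single-pass simulation step fails.} The samples must be fixed before the neighbourhood is revealed. So a sample of $n^{1-\delta_i}$ variables contains a given level-$i$ variable with probability about $n^{-\delta_i}$, not a constant, and geometrically decreasing rates do not change this. The radius-$R$ ball around a start constraint is therefore captured only with probability $n^{-\Theta(T)}$, where $T$ is the number of independent sampling events the ball depends on.

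Independent repetitions cannot compensate. A copy sampling at rate $p$ stores about $pm$ constraints and succeeds with probability about $p^{T}$, so total space would be about $p^{1-T}m\ge m$. Also, $T$ varies from ball to ball, so averaging only the successful explorations is biased toward sparse neighbourhoods.

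The missing idea is an importance-weighted estimator over a \emph{shared} sample:
\begin{itemize}
\item Draw one $O(T_{\max})$-wise independent hash at rate $n^{-c}$, with $T_{\max}$ the maximum ball size and $c=\epsilon/T_{\max}$.
\item Test all of $n^{1-c}$ uniformly chosen start constraints against this single sample.
\item For each captured ball add $\aloc(\text{ball})\cdot n^{cT}$, where $T$ (high-degree copies plus distinct low-degree parents in the ball) is read off the captured ball itself.
\end{itemize}
This is exactly unbiased for the average of $\aloc$. Each term is at most $n^{cT_{\max}}$. With low-degree threshold $n^{q}$, each ball shares a dependency with at most $O(T_{\max}^2n^{q+c})$ others, so the variance is $O(T_{\max}^2 n^{q+2c+2cT_{\max}-1})$ and Chebyshev finishes. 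No constant success probability is needed, only that $T\le T_{\max}$ is a constant.

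\textbf{Gap 2: your reduction does not give a genuinely bounded-degree instance,} which the above requires. Your low tier is left unsplit, so variables of degree up to $n^{\delta}$ survive and balls are polynomially large. A reservoir of $D$ incident constraints per high-degree $v$ amounts to a single copy of $v$. A ball entering $v$ through any other constraint cannot be continued, and no concentration of ``local LP marginals'' relates such an object to $\val{I}$.

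What works is to split \emph{every} variable as in Trevisan's reduction: about $\deg(v)$ copies of $v$, and $B$ copies of each constraint, each choosing a uniformly random copy of each of its variables. Folding a split assignment into a random assignment of $I$, plus Chernoff and a union bound over assignments, gives $\val{I}\le\val{I'}\le(1+\epsilon)\val{I}$. Degrees are bounded only in expectation, so constraints touching copies of degree above $B/\rho=2Bk/\epsilon$ are deleted, at a cost of at most $\epsilon mB$ constraints.

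The tiers then differ only in how copies are sampled: jointly through the parent for low-degree $v$ (all its incident constraints can be stored), and individually for high-degree $v$. A high-degree $v$ needs about $B\deg(v)n^{-c}$ incident constraint copies, not $D$. To get them:
\begin{itemize}
\item Sample (constraint copy, position) pairs at rate $2n^{-c}$.
\item Classify $v$ and set $\dest(v)$ from a second, independent sample at rate $n^{-c}$. Exact degree counts are unavailable in sublinear space, and reusing the first sample would correlate the classification with the construction.
\item Keep each of the $\dest(v)$ copies with probability $n^{-c}$.
\item Thin the pairs to overall rate $j/\dest(v)$, where $j$ is the number of kept copies, and assign them uniformly to those copies.
\end{itemize}
This reproduces exactly the law of the sub-instance induced by the sampled copies, which is the coupling your proposal leaves open.
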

\end{graytbox}

Given that \textsc{Max-DiCut} is a special case of \textsc{Max-CSP}, our algorithm fully recovers the recent result of \cite{AzarmehrBFS26} via a completely different algorithm and proof. See \Cref{sec:techniques} for an overview of our techniques and a comparison to previous ones.
\section{Technical Overview}\label{sec:techniques}

\subsection{Background}

Our starting point is the work of \cite{SingerTV25}, which solves constant-degree CSPs in the streaming setting. This algorithm combines the distributed algorithm of \cite{Yoshida11} 
with the streaming techniques of \cite{SaxenaSSV25} for approximating the \maxdicut{} in constant-degree graphs,
to show that the basic LP for constant-degree \CSP s can be approximated in sublinear space.
Here, the degree of a variable in a \CSP{} denotes the number of constraints it appears in, and the \CSP{} is bounded-degree if all its variables have a bounded degree. 
A well-known reduction by \citet*{trevisan} converts any \CSP{} to one with the same value that has a bounded maximum degree. At first glance, this might suggest that we are done: apply the reduction of \cite{trevisan} to reduce the degrees, and then employ the constant-degree algorithm of \cite{SingerTV25}. However, the main challenge is that this reduction requires linear space in the number of constraints, far beyond the amount of space available in the streaming setting. In fact, this is the main challenge that recent works of \cite{AzarmehrBFS26,velusamy2025near} try to address for the \maxdicut{} problem, which is a special case of CSPs. 

Our algorithm runs the LP on a modified variant of Trevisan's reduced graph where the degrees are bounded only in expectation.
However, this variant satisfies additional statistical properties that allow us to essentially ignore any constraints adjacent to variables violating the constant-degree bound. An offline (non-streaming) description is presented below.

Given a \CSP{} with $m$ constraints and $n$ variables, the reduction proceeds by creating a new instance with $B$ constraints corresponding to each of the original constraints (hereafter referred to as the corresponding copies of the original constraint), and $\deg(v)$ variables corresponding to each original variable.
Here, $B$ is a sufficiently large constant determined by our accuracy parameter $\epsilon$.
For each constraint $C$ in the original \CSP{}, the corresponding copies use the same predicate as $C$.
The variables involved in each copy of $C$ shall be copies of the variables involved in $C$.
Let $v_i$ denote $i$-th variable appearing in $C$.
Then, each copy of $C$ chooses a copy of $v_i$ uniformly at random for its $i$-th variable. See the figure below for an illustration with $B=2$, where the vertices correspond to variables and the edges correspond to constraints.

\begin{figure}[h]
    \centering
    \resizebox{0.7\textwidth}{!}{
    \begin{tikzpicture}[  
    every node/.style={circle, draw, thick, inner sep=0pt},
    orig/.style={minimum size=0.66cm},
    redc/.style={minimum size=0.56cm, font=\small},
    thick
]

\def\graphsep{8.0}      
\def\alpha{30}          
\def\rspread{0.8}       
\def\rvgap{3.5}         
\def\rhgap{3.0}         
\def\dhgap{2.0}         
\def\rcdoffset{0}       
\def\B{2}               
\pgfmathsetseed{456}    

\gdef\drawnedges{}
\foreach \char in {a,b,c,d} {
    \foreach \num in {1,2,3} {
        \expandafter\gdef\csname deg\char\num\endcsname{0}
    }
}

\pgfmathsetmacro{\rdx}{\rspread * cos(\alpha)}
\pgfmathsetmacro{\rdy}{\rspread * sin(\alpha)}
\pgfmathsetmacro{\halfsep}{\graphsep / 2}

\def\origwidth{3.464} 
\pgfmathsetmacro{\origshift}{-\halfsep - (\origwidth / 2)}

\pgfmathsetmacro{\redwidth}{\rhgap + \dhgap + \rcdoffset}
\pgfmathsetmacro{\redshift}{\halfsep - (\redwidth / 2)}

\begin{scope}[xshift=\origshift cm, yshift=1.0cm, nodes={orig}]
    \def\hgap{1.732} \def\vgap{2.0}
    \node (a) at (0, \vgap)    {$a$};
    \node (b) at (0, 0)        {$b$};
    \node (c) at (\hgap, \vgap/2)  {$c$};
    \node (d) at (2*\hgap, \vgap/2)  {$d$};
    \draw (a) -- (b); \draw (a) -- (c); \draw (b) -- (c); \draw (c) -- (d);
\end{scope}

\begin{scope}[xshift=\redshift cm, nodes={redc}]
    
    \foreach \i [evaluate=\i as \pos using (\i-1.5)] in {1,2} { \node (a\i) at (\pos*\rdx, \rvgap + \pos*\rdy) {$a_{\i}$}; }
    \foreach \i [evaluate=\i as \pos using (\i-1.5)] in {1,2} { \node (b\i) at (\pos*\rdx, 0 - \pos*\rdy) {$b_{\i}$}; }
    \foreach \i [evaluate=\i as \pos using (2.0-\i)] in {1,2,3} { \node (c\i) at (\rhgap + \rcdoffset, \rvgap/2 + \pos*\rspread) {$c_{\i}$}; }
    \foreach \i [evaluate=\i as \pos using (1.0-\i)] in {1} { \node (d\i) at (\rhgap + \dhgap + \rcdoffset, \rvgap/2 + \pos*\rspread) {$d_{\i}$}; }

    \begin{scope}[draw=black, thin]
        \newcommand{\drawRandomBRegular}[4]{ 
            \def\edgecount{0}
            \whiledo{\edgecount < \B}{
                \pgfmathsetmacro{\u}{int(random(1,#2))}
                \pgfmathsetmacro{\v}{int(random(1,#4))}
                \edef\edgeid{e#1\u#3\v}
                \def\isnew{1}
                \foreach \entry in \drawnedges { \ifx\entry\edgeid \def\isnew{0} \fi }
                \expandafter\let\expandafter\curdegU\csname deg#1\u\endcsname
                \expandafter\let\expandafter\curdegV\csname deg#3\v\endcsname
                \ifnum\isnew=1
                    \ifnum\curdegU<\B
                        \ifnum\curdegV<\B
                            \draw (#1\u) -- (#3\v);
                            \xdef\drawnedges{\drawnedges \edgeid,}
                            \pgfmathsetmacro{\newdegU}{int(\curdegU + 1)}
                            \pgfmathsetmacro{\newdegV}{int(\curdegV + 1)}
                            \expandafter\xdef\csname deg#1\u\endcsname{\newdegU}
                            \expandafter\xdef\csname deg#3\v\endcsname{\newdegV}
                            \pgfmathsetmacro{\nextcount}{int(\edgecount + 1)}
                            \global\let\edgecount=\nextcount
                        \fi
                    \fi
                \fi
            }
        }
        \drawRandomBRegular{a}{2}{b}{2}
        \drawRandomBRegular{a}{2}{c}{3}
        \drawRandomBRegular{b}{2}{c}{3}
        \drawRandomBRegular{c}{3}{d}{1}
    \end{scope}
\end{scope}

\end{tikzpicture}
    }
    \label{fig:placeholder}
\end{figure}
\noindent
Note that each variable copy in the reduced instance has {\em expected} degree $B$ which is a constant.

There are a number of major challenges to implement this reduction in the streaming setting and then apply the algorithm of \cite{SingerTV25} that we address in the next section. First, the number of copies of each variable (vertex) depends on its degree, which is only revealed gradually in the stream. Moreover, the existence of ``high-degree'' vertices in the original graph makes it challenging to collect the relevant local neighborhood of the corresponding vertices in the reduced graph, which is needed to simulate the distributed algorithm of \cite{Yoshida11}.

\subsection{Our contribution}

Throughout this section, we assume that all the predicates are $k$-ary (take $k$ variables) and non-trivial (they are not satisfied or violated independently of the variable assignments), and that the number of constraints $m$ is linear in the number of variables $n$. These assumptions hold w.l.o.g.\ by simply sub-sampling the constraints with the appropriate probability; see \cref{sec:input-simplification}.

On a high level, our algorithm estimates the value of the \BasicLP{} by sampling a subset of the constraints and approximating their contribution to the value of the LP.
To do so, we attempt to collect the $r$-neighborhood around each sampled constraint and run the \LOCAL{} algorithm due to \cite{Yoshida11}, where $r$ is a constant depending on $\epsilon$.

For constant-degree \CSP s, \cite{SingerTV25,SaxenaSSV25} show that this can be done by sampling every variable with probability of $n^{-c}$ and storing all the constraints they appear in.
Since the \CSP{} has a constant maximum degree $\Delta$, the size of the $r$-neighborhood is at most a constant $\Tmax = (k\Delta)^r$.
Therefore, the entire neighborhood is sampled with a probability $n^{-c\Tmax}$, which turns out to be sufficiently large when $c$ is relatively small.

This scheme fails when the degrees are unbounded, since 
(1) the constraints adjacent to a high-degree variable may not fit into the memory, and more importantly,
(2) the sizes of the $r$-neighborhoods can be arbitrarily large, and hence there is no hope of collecting the entire neighborhood by sampling.

This is where Trevisan's reduction is employed to construct a \CSP{} with bounded degree.
However, as mentioned earlier, Trevisan's reduction has a size linear in the number of constraints, and hence cannot be stored in the memory.
Furthermore, the construction depends on the degrees of vertices, which is not known to the algorithm through the stream.
Hence, it cannot be constructed trivially in the streaming setting.

The crux of our algorithm is to access parts of the reduced \CSP{}, without ever explicitly building the rest of it.
These parts are then used to run the low-degree scheme and obtain the approximation of \maxCSP{}, all in one pass.
We achieve this through our main technical contribution, which we call \emph{two-tier sampling}.
The end goal is to sample an $n^{-c}$-fraction of the variables in the reduced \CSP, and store all the constraints they appear in.
To do so, we divide the variables of the original \CSP{} into two tiers, low-degree and high-degree, and handle them differently.
The low-degree variables are managed by sampling the variables, whereas the high-degree variables are treated by sampling the constraints.
We use $n^q$ as a threshold for the degree\footnote{In reality, our true threshold is $n^q$ on the number of samples we see which actually implies an expected degree of $n^{q+c}/B$ based on the construction. This technical detail will be elaborated upon further in the analysis.}, where $q$ is a sufficiently small constant determined by $\epsilon$. This threshold is imposed only approximately in the final streaming algorithm, as we use sampling to detect the high-degree variables. 
Any low-degree variables that are erroneously detected as high-degree shall hereafter simply be referred to as high-degree.

Each low-degree variable (in the original $\CSP$) is sampled with a probability of $n^{-c}$, and all adjacent constraints are stored for the sampled variables.
This fits in a memory of $O(n^{1 - c})$, since each constraint is stored if any of its $k$ variables are sampled, which happens with a probability of at most $k n^{-c}$, and the number of constraints is $O(n)$.
Using this information, we can access \emph{all} the copies of the sampled variables in the reduced \CSP{}:
We create $B$ copies for each of these stored constraints, and $\deg(v)$ copies for each sampled low-degree variable $v$.
Then, for each copy of a constraint adjacent to $v$, we assign one of the copies of $v$ uniformly at random.
Since all the constraints adjacent to $v$ have been stored, this process constructs all the copies of $v$ (and its adjacent constraints) in the reduced \CSP{}.

To handle the high-degree variables, we sample each constraint copy \emph{in the reduced \CSP} with a probability of $2n^{-c}$.
This provides us with a sample of adjacent constraints for each high-degree variable $v$.
The key observation is that the sampled constraints can be used to access $\deg(v) \cdot n^{-c}$ copies corresponding to $v$ in the reduced \CSP{}.
It is at this point that we exploit the specific variant of Trevisan's reduction.
That is, since each constraint copy selects the variable copies uniformly at random, we can simulate sampling $n^{-c}$ of the variable copies as elaborated below. 

Let $F$ denote the set of sampled constraint copies adjacent to a high-degree variable $v$.
To simulate sampling an $n^{-c}$-fraction of the copies in the reduction, 
we start by considering all the copies $v_1, v_2, \ldots, v_{\deg(v)}$ and sampling each of them with a probability of $n^{-c}$, to construct a set of copies $S$.
Then, we subsample $F$ proportional to the size of $S$ to get a set of constraints $F'$.
Finally, each constraint in $F'$ picks a sampled copy of $v$ uniformly at random.
That is, if for a constraint copy $C \in F'$, the original constraint uses $v$ as the $i$-th variable, then $C$ uses a random copy of $v$ in $S$ as its $i$-th variable.
We remark that $F$ was oversampled with a probability of $2n^{-c}$ (rather than $n^{-c}$) to ensure that we have enough edges to supply the $\card{S}$ copies with high probability, even if $\card{S}$ is larger than its mean $\deg(v) \cdot n^{-c}$.
Also, we note that since in (the offline version of) the reduction, each constraint copy chooses one of the variable copies uniformly at random, the instance produced by this process has the same distribution as the offline reduction.
That is to say, going over all constraint copies, picking variable copies uniformly at random, and then sampling some of the variable copies (the offline version), is equivalent to sampling the variable copies, sampling the constraint copies (proportional to the number of sampled variable copies), and then assigning the sampled variables to the sampled constraints uniformly at random (streaming version).

Thus far, we have simulated sampling an $n^{-c}$-fraction of the variables in the reduced \CSP{},
and it remains to produce the final estimate by invoking the \LOCAL{} algorithm.
To produce the final estimate, we sample a uniformly random set of constraints of size $n^{1-c}$ without replacement (this is in addition to the previous samples).
For each sampled constraint, we are able to run the \LOCAL{} algorithm if its $r$-neighborhood has been collected in the previous steps.
This happens with a probability of $n^{-cT}$, where $T$ is the number of dependencies for that constraint. 
Here, the number of dependencies refers to the number of copies in its $r$-neighborhood corresponding to a high-degree variable plus the number of \emph{distinct} low-degree variables for which there is a corresponding copy in the $r$-neighborhood.
To see why the probability of the $r$-neighborhood is $n^{-cT}$,
note that we sample each low-degree variable (in the original \CSP{}) with probability $n^{-c}$,\footnote{More accurately, they are sampled $O(1)$-wise independently.} and if sampled, we replicate all the corresponding copies in the reduced \CSP{}.
In contrast, for high-degree variables, we sample each corresponding copy independently with a probability $n^{-c}$.
The value of $T$ is different for different constraints.
If the $r$-neighborhood of a constraint is collected, we use the \LOCAL{} algorithm to compute its value in the LP, and we also discover the value of $T$.
Then, we factor in the value of the constraint in our estimate while scaling it by $n^{cT}$ to compensate for the cases where the neighborhood is not collected.

We further note that we ensure the degree of the variables fed into the \LOCAL{} algorithm is bounded by a constant.
While Trevisan's reduction, as stated in this section, only bounds the degrees in expectation, we show that removing any constraint copy for which one of the assigned variable copies has a large degree does not perturb the $\maxCSP$ significantly.
As a result, we can upper-bound $T$ for every constraint by a constant.
This results in (1) a higher success rate $n^{-cT}$ in collecting the neighborhoods,
(2) limited dependence between the success events.
Both of these are crucial in obtaining an accurate estimate.

\paragraph{Comparison with \cite{AzarmehrBFS26}.}
As discussed earlier, our work directly implies and generalizes the recent result of \cite{AzarmehrBFS26} for the \maxdicut{} problem. While the two works do share some similarities, such as efficiently simulating suitable local algorithms in a single pass of the streaming setting, there are actually significant differences between the two. In particular, the algorithm of \cite{AzarmehrBFS26} {\em is not} a special case of the algorithm we present in this work. Indeed, the algorithm of \cite{AzarmehrBFS26} heavily relies on the inner workings of a \LOCAL{} subroutine for \maxdicut{}, derived from parallelizing a sequential submodular maximization algorithm. Namely, they exploit the fact that, instead of using the entire $r$-neighborhood, a much sparser subsampled subgraph of the $r$-neighborhood can be used to approximate the output of the \LOCAL{} subroutine, which is specific to \maxdicut{}.
As a result, their analysis requires complicated arguments to address the intricate interaction between low-degree and high-degree vertices, tightly coupled with the \LOCAL{} algorithm.
In contrast, our algorithm uses the \LOCAL{} algorithm of \cite{Yoshida11} as a black box, i.e., the \LOCAL{} algorithm could be potentially replaced with any other \LOCAL{} subroutine that approximates the \BasicLP{}.
Consequently, we are able to decouple the algorithm and abstract away the specificity of the input \CSP{}.
This shifts the focus of the analysis to accessing sampled parts of a reduced input instance, yielding a much simpler and conceptually cleaner generalization of \cite{AzarmehrBFS26}.

\section{Preliminaries \& Basic Setup}

We use $[n]$ to denote the set $\{1, 2, \ldots, n\}$.

\begin{definition}[$\CSP(\mc{F})$]
    For a constant positive integer $k$, alphabet $\Sigma$, and family of predicates $\mc{F} \subseteq \set{f:\Sigma^k \to \set{0,1}}$, define the constraint satisfaction problem $\CSP(\mc{F})$. An instance $I=(V,F)$ of $\CSP(\mc{F})$ is given by a set of $n$ variables $V$ and a set of $m$ constraints $F=(C_1, \dots, C_m)$. Each $C_i=((C_i^1,\dots C_i^k),f_i)$ where, for $t \in [k]$, $C_i^t \in V$ is the $t^{th}$ variable in $C_i$, and $f_i \in \mc{F}$ is a predicate applied over these variables.
\end{definition}

We say that an assignment $\tau : V \to \Sigma$ satisfies constraint $C_i=((C_i^1,\dots,C_i^k),f_i)$ when $f_i(\tau(C_i^1),\dots,\tau(C_i^k)) = 1$. We denote the value of the assignment $\tau$ over instance $I$ as the fraction of constraints satisfied by the assignment:
\[
\val{I}(\tau) = \frac{1}{m}\sum_{i \in [m]} f_i(\tau(C_i^1),\dots,\tau(C_i^k)).
\]

We use $\val{I}$ to denote the maximum $\val{I}(\tau)$ over all assignments $\tau:V\to\Sigma$. We also note that this is equivalent to defining $\val{I}$ as the maximum over all fractional or random assignments of $\tau$. 

Now that we have formally defined $\CSP$s, we also define the $\BasicLP$ relaxation of a $\CSP$. This is a canonical relaxation that has an integrality gap which is heavily tied to the optimality of approximating $\CSP$s in various settings. We give a formal description of the $\BasicLP$ in \Cref{sec:basic-lp}.

\begin{definition}[$\BasicLP$]
    For $\CSP(\mc{F})$, we define the $\BasicLP$ of instance $I=(V,F)$ as an LP relaxation described fully in \Cref{sec:basic-lp}. For an instance $I$, let the optimal value of the $\BasicLP$ for $I$ be $\vallp{I}$. We note that this LP is a relaxation, so for all $I$, $\val{I} \leq \vallp{I}$.
\end{definition}

\begin{definition}[Integrality Gap $\alpha_{\mc{F}}$ of $\BasicLP$] \label{def:int-gap}
    For $\CSP(\mc{F})$, we define $\alpha_{\mc{F}}$ as the integrality gap for the $\BasicLP$. Formally,
    \[
    \alpha_{\mc{F}} = \inf_{I\in\CSP(\mc{F})} \frac{\val{I}}{\vallp{I}}.
    \]
\end{definition}

We note that every $\CSP$ instance $I$ can be represented by a hypergraph where each variable $v$ is a vertex and each constraint $C$ is a hyper-edge adjacent to the $k$ vertices corresponding to its $k$ variables. Because of this, we can borrow some language from graphs to instances of $\CSP$s.

\begin{definition}
    For a variable $v$ in instance $I$ of a $\CSP$, let $\deg_I(v)$ be the degree of $v$ in $I$, or the number of constraints in $I$ adjacent to $v$.
\end{definition}

\begin{definition}
    For a $\CSP(\mc{F})$ and instance $I$ with bounded-degree $B$, let $\mc{N}_{\mc{F}}(r)$ be the set of isomorphism classes of all possible neighborhoods of radius $r$ around a single constraint for any $B$-bounded instance of $\CSP(\mc{F})$. Let $N_I(i,r) \in \mc{N}_{\mc{F}}(r)$ be the specific neighborhood around constraint $C_i$ in instance $I$.
\end{definition}

\begin{definition}
    For a $\CSP$ instance $I=(V,F)$, let $S \subseteq V$. Define the induced instance $I[S] = (S,F[S])$ as a $\CSP$ instance containing only the variables in $S$, and the constraints $F[S]$ adjacent only to variables in $S$. In other words, remove all variables not in $S$ and all of their adjacent constraints.
\end{definition}

We now give a key lemma from \cite{Yoshida11} that gives an additive approximation of $\vallp{I}$ for bounded-degree instances $I$ using a local algorithm. We use the algorithm $\aloc$ as a black-box for our purposes.

\begin{lemma}[\cite{Yoshida11}, Theorem 3.1]\footnote{Though the original theorem appears in \cite{Yoshida11}, we borrow a similar reformulation of this theorem from \cite{FMW25dichotomy}. We also note this theorem was not defined on hypergraphs, but instead on a corresponding bipartite graph. This, however, does not affect the truth of the lemma as stated here.} \label{lem:aloc}
    For any $\CSP(\mc{F})$ with predicate family $\mc{F} \subseteq \set{f:\Sigma^k \to \set{0,1}}$, and constants $B$, $|\Sigma|$, and $\epsilon$. There exists a positive integer $r \leq \exp(\operatorname{poly}(kB|\Sigma|/\epsilon))$ and a deterministic map $\aloc:\mc{N}_{\mc{F}}(r) \to [0,1]$ that satisfy the following:

    Given an instance $I=(V,(C_1, \dots C_m))$ of $\CSP(\mc{F})$ with bounded-degree $B$, the average output of the local algorithm $\aloc(N(i,r))$ for $i \in [m]$ satisfies
    \[
    \vallp{I} -\epsilon \leq \frac{1}{m} \sum_{i \in [m]} \aloc(N(i,r)) \leq \vallp{I} + \epsilon.
    \]
\end{lemma}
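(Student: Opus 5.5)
The plan is to view $\aloc$ as an $r$-round distributed LP solver run on the constraint–variable incidence (bipartite) graph of $I$. The output at constraint $C_i$ is the fractional value that $C_i$ gets in a near-optimal, locally computed solution of the \BasicLP{}. In any deterministic $r$-round \LOCAL{} protocol, the final state of a node is a function only of the isomorphism class of its radius-$r$ neighborhood. So once such a protocol is fixed, the map $\aloc:\mc{N}_{\mc{F}}(r)\to[0,1]$ is defined automatically, and the work is to bound $r$ and the additive error.

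First I would write the \BasicLP{} explicitly. For each constraint $C_i$ it has a distribution $\lambda_i$ over $\Sigma^k$, and for each variable $v$ a distribution $\mu_v$ over $\Sigma$, tied by marginal-consistency equalities. The objective is $\frac1m\sum_i \Ex_{x\sim\lambda_i}[f_i(x)]$. Each consistency equation involves one constraint and one variable, and each variable appears in at most $B$ constraints. The LP is therefore a sparse LP of locality $O(1)$ in the incidence graph, with all coefficients and degrees bounded by functions of $k,B,|\Sigma|$.

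Second, I would solve it approximately in the \LOCAL{} model. I would relax each consistency equality to an $\ell_1$ penalty and dualize, obtaining a concave objective. That objective can be optimized by a multiplicative-weights / projected-gradient scheme in which each step updates the variables of a node using only the current values of its neighbors. After $T=\operatorname{poly}(kB|\Sigma|/\epsilon)$ steps, the averaged iterate has total consistency violation and objective suboptimality at most $\epsilon m/4$. Unrolling $T$ steps gives radius $r=O(T)$; Yoshida's actual bound $\exp(\operatorname{poly})$ allows slack here, for example if one instead runs a Kuhn–Moscibroda–Wattenhofer style phase decomposition.

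Third, I would repair feasibility locally. For each variable $v$, replace $\mu_v$ by an appropriate average of the marginals induced by its $\le B$ incident constraints. Then adjust each $\lambda_i$ toward the product of the repaired marginals in proportion to its local violation. Each constraint's value changes by at most $O(k)$ times its local violation, so the total loss is $O(k\epsilon m)$. The resulting per-constraint values average to within $\epsilon$ of $\vallp{I}$ after rescaling $\epsilon$.

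The two directions of the guarantee come from different facts. The upper bound $\frac1m\sum_i\aloc\le\vallp{I}+\epsilon$ holds because the repaired solution is nearly feasible, within $\epsilon$ of a feasible point. The lower bound comes from the convergence guarantee of the solver, measured against the optimum of the penalized LP, which is at least $\vallp{I}$.

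The main obstacle I expect is the second step: making the optimization error additive $\epsilon m$ with a round count independent of $n$ and $m$. Standard width-dependent analyses give error scaling with the maximum penalty weight, and the equality constraints are neither packing nor covering. What I would try first is to bound the dual variables by a constant $W=O(k/\epsilon)$. This is justified because violating one consistency constraint by $\delta$ can only gain $O(\delta)$ objective. With duals bounded by $W$, the width is $O(WkB|\Sigma|^k)$, and the multiplicative-weights regret bound then gives the desired $T$.
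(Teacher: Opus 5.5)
\paragraph{Comparison with the paper.} The paper does not prove this lemma. It imports it as a black box from \cite{Yoshida11}, in the reformulation of \cite{FMW25dichotomy}, and the rest of the paper only uses that $\aloc$ exists and that $r$ is bounded. Your proposal is therefore a genuinely different, self-contained route, and most of it is sound:
\begin{itemize}
\item With an $\ell_1$ penalty, the duals lie in a compact box by construction, and the Lagrangian is bilinear over a product of simplices and boxes.
\item The regrets of per-node no-regret dynamics add up over the $O(km)$ blocks, and the degree bound $B$ keeps the gains of the variable blocks bounded.
\item An ID-free deterministic update makes the averaged iterate at $C_i$ a function of the isomorphism class of its ball.
\end{itemize}
Done carefully, this gives total error $m\cdot\poly(k,B,|\Sigma|)\,W/\sqrt{T}$, hence $r=O(T)=\poly(kB|\Sigma|/\epsilon)$. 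That is even better than the stated $\exp(\poly(\cdot))$.

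\paragraph{The gap.} The step that fails as written is the repair. Two other claims you make depend on the same missing ingredient: your exact-penalty claim (``violating a consistency constraint by $\delta$ gains only $O(\delta)$'') and your ``nearly feasible'' upper bound.

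Mixing $\lambda_i$ with weight $\theta$ toward $\bigotimes_j \mu_{v_{i,j}}$ gives $j$-th marginal $(1-\theta)\nu_{i,j}+\theta\mu_{v_{i,j}}$. This equals $\mu_{v_{i,j}}$ only if $\theta=1$, which destroys the correlations of $\lambda_i$. More generally, any correction $(1-\theta)\lambda_i+\theta\pi$ with exact marginals needs $\theta\ge 1-\mu_{v_{i,j}}(\sigma)/\nu_{i,j}(\sigma)$ for every $\sigma$. That is close to $1$ whenever $\mu$ puts tiny mass where $\nu_{i,j}$ puts some, so the loss is not proportional to the violation. Replacing $\mu_v$ by an average of its incident marginals does not fix this.

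\paragraph{The fix.} Use a coupling lemma instead. Sample $x\sim\lambda_i$. Then, independently for each $j$, replace $x_j$ by $y_j$ drawn from a maximal coupling of $(\nu_{i,j},\mu_{v_{i,j}})$ conditioned on its first coordinate being $x_j$. The law $\lambda_i'$ of $y$ has marginals exactly $\mu_{v_{i,j}}$, and
$d_{\mathrm{TV}}(\lambda_i,\lambda_i')\le\Pr[x\neq y]\le\sum_j d_{\mathrm{TV}}(\nu_{i,j},\mu_{v_{i,j}})$.
Since $f_i\in\{0,1\}$, the value of $C_i$ drops by at most this amount. This one lemma gives you three things:
\begin{enumerate}
\item[(i)] It proves exactness of the penalty already for $W=1/2$ in $\ell_1$ units, independent of $k$ and $\epsilon$.
\item[(ii)] It gives a one-hop local repair that leaves $\mu$ untouched and outputs an \emph{exactly} feasible solution. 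So $\frac1m\sum_i\aloc(N(i,r))\le\vallp{I}$ is immediate, with no appeal to near-feasibility.
\item[(iii)] It gives the lower bound. The repaired value is at least the penalized value of the averaged iterate, which is at least $m\,\vallp{I}-\epsilon m$.
\end{enumerate}
With this lemma inserted, your argument is complete.
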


\subsection{The Constants}

We use several constants in our analysis that we introduce here for easier access. 
\begin{itemize}
    \item $\epsilon$ is our main error parameter.
    \item $\delta$ is a parameter used for bounding probability. We assume $\delta = O(1/\epsilon)$.
    \item $k$ is the arity of constraints in our $\CSP$, we assume this is constant.
    \item $|\Sigma|$ is the size of our alphabet, we assume this is constant.
    \item $\alpha_{\mc{F}}$ is the integrality gap of the $\BasicLP$. 
    \item $B$ is a parameter used for expected degree in our reduced instances. We take $B = |\Sigma|^{2k}/\epsilon$.
    \item $\rho$ is used alongside $B$ to bound degree in our reduced graph. We take $\rho = \frac{\epsilon}{2k}$.
    \item $r$ is the radius required for our local algorithm to work as needed. $r \leq \exp(\poly(k|\Sigma|B/\epsilon)) \leq \exp(\exp(O(k|\Sigma|/\epsilon)))$.
    \item $\Tmax$ is the maximum number of variables in a local neighborhood of bounded degree $B/\rho$ with radius $r$. $\Tmax \leq \exp(\exp(\exp(O(k|\Sigma|/\epsilon))))$. 
    \item $c$ is a parameter used for sampling variables and constraints. We take $c = \epsilon/\Tmax$. We also assume that $n^{c}$ is an integer for convenience.
    \item $q$ is a parameter used for thresholding between high and low-degree vertices. We pick $q = \epsilon$, but we note that there is a high amount of flexibility with this choice.
    \item Lastly, we assume that $n$ is large enough to overpower all of these constants.
\end{itemize}
\section{Simplifying the Input}
\label{sec:input-simplification}

In this section, we make a series of simplifications to the input,
and show via reductions that these assumptions incur no loss of generality.
For the remainder of the paper, we assume the simplifications hold.

\begin{observation}
    We can assume, without loss of generality, that all the constraints are $k$-ary and nontrivial, where a constraint is trivial if its value is independent of the variable assignment.
\end{observation}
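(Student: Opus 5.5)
We reduce a general instance $I=(V,F)$ to one that has only $k$-ary nontrivial constraints. The reduction must preserve both $\val{I}$ and $\vallp{I}$ up to a controlled error, and it must be implementable in a single streaming pass with no extra space.

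\emph{Trivial constraints.} A constraint whose predicate is constant is either always satisfied or never satisfied, under any integral or fractional assignment. In the \BasicLP{} its local distribution contributes the same constant value. So if $m_1$ trivially-satisfied constraints, $m_0$ trivially-violated constraints and $m'$ nontrivial constraints form the instance $I'$, then
\[
m\,\val{I} = m_1 + m'\val{I'}, \qquad m\,\vallp{I} = m_1 + m'\vallp{I'}.
\]
In the stream, each constraint can be classified on arrival. We keep counters for $m_0$ and $m_1$ in $O(\log n)$ space and feed only the nontrivial constraints to the main algorithm, which returns an estimate $\widetilde v$ of $\val{I'}$. We output $(m_1 + m'\widetilde v)/m$. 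Suppose $\widetilde v\in[(\alpha_{\mc F}-\epsilon)\val{I'},\,\val{I'}]$. Using $\alpha_{\mc F}\le 1$ and $m_1\ge(\alpha_{\mc F}-\epsilon)m_1$, the output lies in $[(\alpha_{\mc F}-\epsilon)\val{I},\val{I}]$. If $m'=0$ the answer $m_1/m$ is exact.

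One subtlety: the guarantee of \cref{thm:main} is multiplicative. It applies cleanly here because the combination above is a nonnegative affine map with a nonnegative additive term. If the main algorithm instead only gives additive error $\epsilon$ in terms of $\vallp{}$, the same computation still works. This relies on the fact that the nontrivial part has value bounded below by a constant, as in the later reduction where $m$ is linear in $n$.

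\emph{Arity.} Constraints of arity $k'<k$ are padded to arity $k$. The predicate $f$ is replaced by $\tilde f(x_1,\dots,x_k)=f(x_1,\dots,x_{k'})$, which ignores the extra coordinates. The padded slots are filled by repeating one of the constraint's own variables, for instance $C^1$. The padded predicate family $\tilde{\mc F}$ is finite and of the same arity. Satisfying assignments correspond exactly, and the \BasicLP{} marginal-consistency constraints for the repeated variable are implied by the originals. Hence $\val{}$ and $\vallp{}$ are unchanged and $\alpha_{\tilde{\mc F}}=\alpha_{\mc F}$ on this image.

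The main obstacle is the arity step, specifically checking that repeated variables are allowed in the \BasicLP{} of \Cref{sec:basic-lp}, so that its local distribution on a tuple with repeated entries is forced to be consistent. If repetition were disallowed, I would instead introduce a single fresh dummy variable. Its marginals are unconstrained and it does not affect any value, at the cost of one extra high-degree variable. That variable is later handled by the high-degree tier.
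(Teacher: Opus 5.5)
Your proposal is correct and follows the paper's proof: pad short constraints with coordinates that the predicate ignores, and for trivial constraints run the algorithm on the nontrivial ones and report $(m_1+m'\widetilde v)/m$, which is the paper's chain $\alpha\,\val{I}\le(\alpha m^*+m_T)/m\le(m_0\hat{\mathsf v}+m_T)/m\le\val{I}$. You go beyond what the paper writes by checking that $\vallp{}$ is preserved and by reusing an existing variable or a single dummy so that $n$ does not grow; the one thing to tighten is that $\alpha_{\tilde{\mc F}}=\alpha_{\mc F}$ must hold over all $\tilde{\mc F}$-instances, not just the image of the padding map, because the Trevisan reduction puts independently chosen copies of $C^1$ in the padded slots, but this follows the same way (project the local distributions, or extend them by independent marginals on the ignored coordinates).
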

\begin{proof}
    Given an input where the constraints can be $k'$-ary for any $k' \leq k$, we can reduce to a $k$-ary instance by adding $k - k'$ variables to the constraint and disregarding their value.

    Given an algorithm $\mc{A}(I)$ which $\alpha$-approximates the value of $\val{I}$ for instances with non-trivial predicates, we can obtain another algorithm $\mc{A}'(I)$ that achieves the same approximations while allowing trivial predicates.
    This is done by simply running $\mc{A}$ on the non-trivial constraints in the input, and then accounting for the value of the trivial predicates.
    
    More precisely, for an arbitrary input instance $I$, let $m_0$, $m_T$, and $m_F$ denote the number of nontrivial constraints, trivially satisfied (true) constraints, and trivially unsatisfied (false) constraints, respectively.
    Let $m^*$ be the maximum number of nontrivial constraints that can be satisfied at the same time, i.e.\ $\val{I} = (m^* + m_T)/m$.
    We run $\mc{A}$ on the nontrivial constraints, and let $\hat{\mathsf{v}}$ be the reported approximation for $m^*/m_0$. That is,
    $$
    \alpha \frac{m^*}{m_0} \leq \hat{\mathsf{v}} \leq \frac{m^*}{m_0}.
    $$
    Then, we report $(m_0 \hat{\mathsf{v}} + m_T)/m$, as our approximation of $\val{I}$ for the original input.
    It holds that
    $$
    \alpha \val{I} \leq \frac{\alpha m^*+m_T}{m} \leq \frac{m_0 \hat{\mathsf{v}} + m_T}{m} \leq \frac{m^* + m_T}{m} = \val{I}.
    $$
    Therefore, the reported value is an $\alpha$-approximation of $\val{I}$, which concludes the proof.
\end{proof}

\begin{observation}
    We can assume, by incurring an $O(\log n)$ factor in the space complexity, that the number constraints is known to the algorithm up to a constant factor.
\end{observation}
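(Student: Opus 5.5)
The plan is the standard geometric-guessing trick. We run $O(\log n)$ independent copies of the streaming algorithm in parallel, one for each guess $\hat m \in \{2^0, 2^1, \dots, 2^{\lceil \log_2 M\rceil}\}$, where $M$ is an a priori upper bound on the stream length. The number of constraints is at most $|\mc{F}| n^k = \poly(n)$, so only $O(\log n)$ guesses are needed. In the copy indexed by $\hat m$, the algorithm proceeds under the promise that $\hat m \le m < 2\hat m$, so $m$ is known to it up to a factor of $2$. Alongside the copies, we maintain an exact counter of the number of constraints seen so far, which costs $O(\log n)$ bits.

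When the stream ends, the counter gives the true $m$. We select the unique index $\hat m$ with $\hat m \le m < 2\hat m$ and output the estimate of that copy. The selected copy's promise holds, so it has exactly the same correctness guarantee as an algorithm that knew $m$ up to a constant factor. The copies whose promise fails are discarded, so their outputs do not matter.

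The step needing care is space. A copy run with a wrong guess receives an input that violates its promise, and we must make sure it cannot use unbounded memory on that input. We handle this by having each copy enforce its own budget: it tracks the number of constraints it has seen, and once this number reaches $2\hat m$ it aborts and frees its memory, since its guess is then known to be wrong. In addition, any copy may stop storing items once its space exceeds the bound $S$ it would use under a valid promise, because under the promise it would never exceed $S$ except with negligible probability. Each copy therefore uses at most $S$ space, and the total is $O(S\log n)$ plus $O(\log n)$ for the counter. Randomness does not cause trouble: the copies use independent randomness, and only the selected copy's success probability matters, so the w.h.p. guarantee carries over without a union bound.
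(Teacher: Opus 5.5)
Your proposal is correct and follows the paper's proof. Run $O(\log n)$ parallel copies, one per dyadic guess $[2^i,2^{i+1}]$ (using $m=\poly(n)$), terminate any copy that exceeds its space bound, and at the end output the copy whose interval contains the now-known $m$. The extra details you supply (the explicit $O(\log n)$-bit counter, the count-triggered abort, and the remark that no union bound over copies is needed) are left implicit in the paper; note only that your bound $m \le |\mc{F}|\, n^k$ presumes there are no repeated constraints, whereas the paper simply takes $m=\poly(n)$ as given.
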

\begin{proof}
    Given that $m = \poly(n)$, i.e.\ the length of the stream is polynomial in the number of variables, we can use $O(\log n)$ copies of the algorithm to \enquote{guess} the value of $m$.
    That is, for each interval $[2^i, 2^{i+1}]$, we run an independent copy of the algorithm in parallel, which assumes the value of $m$ lies in that interval.
    If, at any point, a copy of the algorithm violates the $O(n^{1-c})$ space bound, it is terminated.
    At the end of the stream, the value of $m$ is known to the algorithm, so we can simply report the output of the appropriate copy.
\end{proof}

\begin{claim}[Folklore]\label{claim:linear-edges}
    We can assume, by incurring a $(1 - \epsilon)$ multiplicative factor in the approximation ratio, that the number of constraints is linear in the number of variables, i.e.\ $m = O(n/\epsilon^2)$.
\end{claim}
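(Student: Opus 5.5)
We reduce to the sparse case by uniform constraint subsampling. By the previous observation we may assume $m$ is known up to a constant factor. If $m \le C n/\epsilon^2$ for a suitable constant $C$, there is nothing to do. Otherwise we set $p = C n/(\epsilon^2 m)$ and keep each arriving constraint independently with probability $p$; this needs only $O(\log n)$ random bits of state per decision and no extra memory. The subsampled instance $I'$ has the same $n$ variables, and a Chernoff bound shows that w.h.p. it has $m' = \Theta(n/\epsilon^2)$ constraints. We then run the assumed algorithm on $I'$ and output its estimate. It remains to show that $\val{I'}$ is within a $(1\pm\epsilon)$ factor of $\val{I}$ w.h.p., so that any $\alpha$-approximation of $\val{I'}$ is an $(\alpha - O(\epsilon))$-approximation of $\val{I}$.

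The key step is a uniform concentration argument over all assignments.

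\begin{itemize}
\item Fix an assignment $\tau\in\Sigma^n$. The number of satisfied constraints kept in $I'$ is a sum of independent Bernoulli variables with mean $p\, m\,\val{I}(\tau)$.
\item Since constraints are nontrivial, $\val{I} \ge |\Sigma|^{-k}$: a uniformly random assignment satisfies each constraint with probability at least $|\Sigma|^{-k}$. This is what lets an additive error become a multiplicative one.
\item By Chernoff, the kept count deviates from its mean by more than $\epsilon\, p m\, |\Sigma|^{-k}$ with probability at most $\exp(-\Omega(\epsilon^2 p m |\Sigma|^{-2k})) = \exp(-\Omega(C n |\Sigma|^{-2k}))$.
\item Normalizing by $m'$, which itself concentrates around $pm$, this gives $|\val{I'}(\tau) - \val{I}(\tau)| \le O(\epsilon)\,|\Sigma|^{-k}$.
\item Taking a union bound over the $|\Sigma|^n$ assignments requires $C \gtrsim |\Sigma|^{2k}\log|\Sigma|$, which is a constant since $k$ and $|\Sigma|$ are constants.
\end{itemize}

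Then w.h.p., simultaneously for all $\tau$, $|\val{I'}(\tau)-\val{I}(\tau)| \le \epsilon\,\val{I}$. Taking the maximum over $\tau$ gives $(1-\epsilon)\val{I} \le \val{I'} \le (1+\epsilon)\val{I}$, and rescaling $\epsilon$ completes the argument.

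The main obstacle is making the union bound over exponentially many assignments go through with only $O(n/\epsilon^2)$ samples. This requires the constant in the $O(\cdot)$ to absorb the $|\Sigma|^{2k}\log|\Sigma|$ factor, and it requires the lower bound $\val{I}\ge|\Sigma|^{-k}$ from nontriviality to convert the additive error into a multiplicative one. A second, minor point is that $m$ is known only up to a constant factor. This merely changes $p$ by a constant, so $m'$ remains $\Theta(n/\epsilon^2)$ and the argument is unaffected.
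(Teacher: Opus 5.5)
Your proposal is correct and follows essentially the same route as the paper: subsample constraints at rate $\Theta(n/(\epsilon^2 m))$, apply Chernoff to each fixed assignment, take a union bound over the $|\Sigma|^n$ assignments, and use $\val{I}\ge|\Sigma|^{-k}$ (from nontriviality) to turn the additive $O(\epsilon|\Sigma|^{-k})$ error into a multiplicative one. The paper rescales counts by $1/p$, whereas you normalize by the realized sample size $m'$ and separately use that $m'$ concentrates around $pm$; you also explicitly handle $m$ being known only up to a constant factor, both of which the paper leaves implicit.
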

\begin{proof}
    Let $\lambda = \frac{\epsilon}{2\card{\Sigma}^k}$,
    and $p = (10 \ln \card{\Sigma}) \frac{n}{\lambda^2 m}$.
    Assume $m \geq 10pm \geq \Omega(n/\epsilon^2)$. Otherwise, the claim already holds.
    First, we show that sampling the constraints with probability $p$ preserves the value of each assignment up to an additive error of $\lambda m$ (after scaling by $1/p$).
    Then, we use the fact that $\val{I} \geq \frac{1}{\card{\Sigma}^k}$ to conclude that the value of the optimal assignment is preserved up to a multiplicative error of $(1 \pm \epsilon/2)$.
    As a result, running an $\alpha$-approximation algorithm on the sampled instance yields a $(1 - \epsilon)\alpha$-approximation of the original input.
    Note that the number of sampled constraints (i.e., the size of the sampled input) is at most $10pm = O(n/\epsilon^2)$ with high probability.

    Consider any assignment $\sigma \in \Sigma^n$ of the variables.
    Let $v(\sigma) = \val{I}(\sigma)\cdot m$ denote the number of constraints that are satisfied by this assignment, and let $v_S(\sigma)$ be the number of sampled constraints that are satisfied.
    Observe that $v_S(\sigma)$ is a sum of $v(\sigma)$ $\{0, 1\}$-variables, that take a value of $1$ with probability $p$.
    Therefore, $\Ex[v_S(\sigma)] = p v(\sigma)$, and by the Chernoff bound we have
    $$
    \Pr\left(\abs*{v_S(\sigma)  - pv(\sigma) } > p \lambda m \right) \leq 2\exp\left(-\frac{p^2\lambda^2m^2}{2pv(\sigma) + p\lambda m}\right)
    \leq 2\exp\left(-p\lambda^2 m / 3\right),
    $$
    where the second inequality follows from $pv(\sigma) + p\lambda m \leq 3pm$.
    When the event above happens, i.e.\ $\frac{1}{p}v_S(\sigma)$ approximates $v(\sigma)$, we say that the value of $\sigma$ is preserved.
    Taking union over all possible $q^n$ assignments, the probability that there exists an assignment whose value is not preserved is at most
    $$
    q^n \cdot 2\exp\left(-p\lambda^2m/3\right) \leq e^{-n}.
    $$
    Therefore, the value of every assignment is preserved up to an additive error of $\lambda m$, and so is the optimal value.

    Finally, observe that since there are no trivial constraints, each constraint is satisfied by at least $\frac{1}{\card{\Sigma}^k}$-fraction of the assignments.
    Therefore, there exists an assignment that satisfies at least $\frac{m}{\card{\Sigma}^k}$ constraints, and so does the optimal assignment.
    As a result, a $\lambda m = \frac{\epsilon}{2}\cdot\frac{m}{\card{\Sigma}^k}$ additive error is a $(1 \pm \epsilon/2)$ multiplicative error.
    This concludes the proof.
\end{proof}

\section{Trevisan's Reduction via Offline Two-Tier Subsampling}

In this section, we present a variation of the Trevisan reduction from unbounded to bounded-degree graphs, first described in \cite{trevisan}. Then, we will give an offline, subsampled version of this reduction that will be modified into a streaming algorithm in \Cref{section:streaming}. The goal is to first prove that the offline estimator is accurate, then show that it can be implemented in the streaming setting (i.e., present a streaming algorithm that has the same output as the offline algorithm, with high probability). In combination, this will show that the streaming algorithm is accurate with high probability.

We use a version of the Trevisan Reduction described in \cite{FMW25dichotomy} with slight modification and claim no novelty. 
The reduction works by first making (approximately) $\deg_I(v)$ copies of each variable $v$, and $B$ copies of each constraint. Then, each constraint copy has its adjacent variables randomly distributed among the copies of the corresponding variable. For example, if constraint $C$ is a predicate of variables $v_1, v_2$, and $v_3$. Then each copy of $C$ will have its first variable randomly chosen from the copies of $v_1$, the second from the copies of $v_2$, etc. After this procedure, we will show that the value of the $\CSP$ is preserved and each vertex copy has bounded expected degree.\footnote{We note that some more work is required to get a truly bounded-degree graph, as we cannot union-bound this concentration. This is done differently in \cite{FMW25dichotomy}, but later on, we will instead opt to simply remove any vertex with a degree that is too high, and this will not change our $\CSP$ value much.}

\begin{algorithm}
    \caption{Trevisan Reduction}
    \label{alg:trevisan}
    \KwIn{$\CSP(\mc{F})$ instance $I = (V, (C_1,\dots,C_m))$}
    \KwOut{$\CSP(\mc{F})$ instance $I'$ of constant (expected) degree}
    
    \DontPrintSemicolon
    \SetAlgoSkip{bigskip}
    \SetAlgoInsideSkip{}

    \For {$v \in V$} {
        Adversarially pick a value $\dest(v)$ in the range $(1 \pm \eadv)\cdot \deg_I(v)$. \\
    }
    $V' \gets \set*{(v,j) | v \in V, j \in [\dest(v)]}$\\
    \For{$l \in [B]$}{
        \For{$i \in [m]$}{
            Initialize $C_{(i,l)}$.\\
            \For{$t \in [k]$}{
                Suppose $v$ is the $t^{th}$ variable in $C_i$.\\
                Uniformly pick $j$ from $[\dest(v)]$. \\
                Let the $t^{th}$ variable of $C_{(i,l)}$ be $(v,j)$. \\
            }
        }
    }
    \KwRet{$I' := (V',(C_{(i,l)})_{i \in [m], l \in [B]})$}.

\end{algorithm}

We now prove that the Trevisan Reduction preserves the value of the $\CSP$ instance with high probability. Again, we borrow this proof almost exactly from \cite{FMW25dichotomy} and claim no novelty.

\begin{lemma} \label{lem:trevisan-accurate}
    For a $\CSP(\mc{F})$ instance $I = (V,(C_1, \dots C_m))$, let $I'=(V',(C_{(i,l)})_{i \in [m], l \in [B]})$ be the result of applying \Cref{alg:trevisan} to $I'$. Then, with failure probability at most $\delta$,
    \[
    \val{I} \leq \val{I'} \leq (1+\epsilon) \cdot \val{I}.
    \]
\end{lemma}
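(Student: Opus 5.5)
The lower bound is deterministic, and the upper bound comes from a first-moment computation for each fixed assignment, followed by Chernoff concentration and a union bound over all assignments of $I'$. (The statement says \Cref{alg:trevisan} is applied to $I'$; this should read $I$.) Throughout I use $\val{I'}$ normalized by the $Bm$ constraints of $I'$.

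\textbf{Lower bound.} Let $\sigma$ be an optimal assignment of $I$. Lift it to $I'$ by setting $\tau'(v,j) := \sigma(v)$ for every copy. Each copy $C_{(i,l)}$ uses the predicate $f_i$ on copies of the same variables as $C_i$. Hence it is satisfied under $\tau'$ exactly when $C_i$ is satisfied under $\sigma$. Each $C_i$ has exactly $B$ copies, so $\val{I'}(\tau') = \val{I}(\sigma) = \val{I}$. This holds for every outcome of the randomness.

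\textbf{Upper bound, first moment.} Fix any assignment $\tau' : V' \to \Sigma$, chosen before the random choices in \Cref{alg:trevisan}. The vertex set $V'$ depends only on the values $\dest(v)$, so this makes sense. For each $v \in V$, let $\mu_v$ be the empirical distribution of $\tau'(v,j)$ over $j \in [\dest(v)]$. Each copy $C_{(i,l)}$ picks its $t$-th variable uniformly among the copies of the $t$-th variable of $C_i$, independently across $t$. So $C_{(i,l)}$ is satisfied with probability $\Pr_{\sigma \sim \prod_v \mu_v}[\sigma \text{ satisfies } C_i]$. Moreover, the $Bm$ indicator variables are mutually independent. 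Summing over all copies gives
\[
\Ex\bigl[\val{I'}(\tau')\bigr] = \Ex_{\sigma \sim \prod_v \mu_v}\bigl[\val{I}(\sigma)\bigr] \le \val{I},
\]
since a random assignment cannot beat the optimum.

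\textbf{Upper bound, concentration and union bound.} By a Chernoff/Hoeffding bound on the $Bm$ independent indicators,
\[
\Pr\bigl[\val{I'}(\tau') > \val{I} + \lambda\bigr] \le \exp(-2\lambda^2 B m).
\]
The number of assignments is $|\Sigma|^{|V'|}$, and $|V'| = \sum_v \dest(v) \le (1+\eadv)\, k m$. A union bound therefore fails with probability at most
\[
\exp\bigl((1+\eadv)\, k m \ln|\Sigma| - 2\lambda^2 B m\bigr).
\]
This is $e^{-\Omega(m)} \le \delta$ once $B \ge C\, k \ln|\Sigma|/\lambda^2$ for a suitable constant $C$. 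On this event, $\val{I'} \le \val{I} + \lambda$.

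\textbf{Converting to multiplicative error.} All constraints are nontrivial, so a uniformly random assignment satisfies each constraint with probability at least $|\Sigma|^{-k}$. Hence $\val{I} \ge |\Sigma|^{-k}$. Choosing $\lambda = \epsilon |\Sigma|^{-k}$ gives $\val{I'} \le (1+\epsilon)\val{I}$.

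\textbf{Main obstacle.} The only delicate point is this last parameter balance. The union bound ranges over roughly $|\Sigma|^{km}$ assignments, while the number of independent trials is only $Bm$. So $B$ must be at least of order $k \ln|\Sigma|\,|\Sigma|^{2k}/\epsilon^2$. This exceeds the listed value $B = |\Sigma|^{2k}/\epsilon$ by roughly a factor of $k\ln|\Sigma|/\epsilon$. I would therefore either enlarge $B$ by this constant factor, which is harmless for all later uses where $B$ only needs to be a constant depending on $\epsilon$, or slightly sharpen the concentration step. It also has to be checked that fixing $\tau'$ before the randomness is legitimate. It is, because $V'$ is determined by the adversarial $\dest$ values, which are fixed before any random choices.
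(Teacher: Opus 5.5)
Your proof is correct and follows the paper's route. The lower bound is the lift $\tau'(v,j)=\tau(v)$. For the upper bound, you fix $\tau'$, compare it with the random assignment drawn from the empirical distribution of each variable's copies, and finish with a Chernoff bound, a union bound over the at most $|\Sigma|^{(1+\eadv)km}$ assignments of $V'$, and $\val{I}\ge|\Sigma|^{-k}$.

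Your parameter warning is also justified, because the paper's Chernoff display uses $(1+\epsilon)mB\,\val{I}$ where the deviation $\epsilon mB\,\val{I}$ belongs. With $B=|\Sigma|^{2k}/\epsilon$ its union bound therefore does not close. Enlarging $B$ to order $\epsilon^{-2}$ (times factors in $k,|\Sigma|$) is the right repair, and merely sharpening the tail bound is not, since $B=\Omega(1/\epsilon^{2})$ is genuinely necessary. For example, take \textsc{Max-Cut} on a high-degree expander, where $\val{I}\approx\tfrac12$. There $I'$ is locally a sparse random graph of average degree about $B$, so its cut fraction is $\tfrac12+\Omega(1/\sqrt{B})$ (e.g.\ by Shearer's bound).
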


\begin{proof}
    First, we can lower-bound $\val{I'}$ by noticing that for any assignment $\tau : V \to \Sigma$, we can create assignment $\tau':V' \to \Sigma$ which assigns the same value to every copy of a variable according to its assignment in $v$. 
    That is, we let $\tau'((v,j)) := \tau(v)$ for all $j$. Then, every copy of a constraint will be satisfied by $\tau'$ if and only if the original constraint is satisfied by $\tau$. Therefore, it holds that $\val{I}(\tau) = \val{I'}(\tau')$, so $\val{I} \leq \val{I'}$.

    Now we upper-bound $\val{I'}$. Fix an assignment $\tau':V'\to \Sigma$. We create an assignment $\tau:V\to\Sigma$ by giving each variable the assignment of one of its copies in $\tau'$, chosen at random.
    That is, we assign $\sigma \in \Sigma$ to variable $v \in V$ with probability 
    \[
    \frac{\abs{\set{j \in [\dest(v)] \mid \tau'((v,j))=\sigma}}}{\dest(v)}.
    \]
    Now, given the fixed assignment $\tau'$ in the reduced instance, we will show that the expected value of the randomized assignment in the original graph is equal to the expected value of the assignment $\val{I'}(\tau')$ over the randomness of the construction of $I'$. Over the randomness of the construction of $I'$, create random variables $X_{(i,l)}$ that take value 1 if $C_{(i,l)}$ is satisfied by the fixed $\tau'$ and 0 otherwise. We see that $\Ex_{I'}[X_{(i,l)}] = \Pr_\tau(C_i \text{ satisfied by } \tau)$. Also, let $X$ be the sum of all $X_{(i,l)}$, then we have
    \[
    \begin{aligned}
        \Ex_{I'}[X] &= \displaystyle \sum_{i \in [m]} \sum_{l \in [B]} \Ex_{I'}[X_{(i,l)}] \\
        &= B\cdot \displaystyle \sum_{i \in [m]} \Pr_\tau(C_i \text{ satisfied by } \tau) = mB\cdot\Ex_\tau[\val{I}(\tau)] \leq mB\cdot\val{I}.
    \end{aligned}
    \]

    Now, we can apply Chernoff (\Cref{lem:add-chernoff}) over the randomness of the construction to show that $X = mB\cdot \val{I'}(\tau')$ is upper-bounded by $(1+\epsilon)\cdot mB\cdot \val{I}$ with high probability. 
    \[
    \Pr_{I'}(X \geq (1+\epsilon)\cdot mB\cdot \val{I}) \leq \exp\paren*{-\frac{((1+\epsilon)\cdot mB\cdot \val{I})^2}{3\cdot (1+\epsilon)\cdot mB\cdot \val{I}}}.
    \]
    Since $\val{I} \geq 1/\card{\Sigma}^k$ and $X = mB\cdot \val{I'}(\tau')$, 
    \[
        \Pr_{I'}(\val{I'}(\tau') \geq (1+\epsilon)\cdot\val{I}) \leq \exp \paren*{-\frac{(1+\epsilon)\cdot mB}{3\card{\Sigma}^k}} \leq \exp \paren*{-\frac{m\card{\Sigma}^k}{3\epsilon}} \leq \frac{\delta}{|\Sigma|^{2mk}}.
    \]
    Thus, for any assignment $\tau'$, the probability that $\val{I'}(\tau') \geq (1+\epsilon)\cdot\val{I}$ is small. Since there are at most $|\Sigma|^{(1+\epsilon)mk}$ possible assignments for $\tau'$, we can union bound to show that
    \[
        \Pr_{I'}(\val{I'} \geq (1+\epsilon)\cdot\val{I}) \leq \delta.
    \]
    This proves the lemma. 
\end{proof}

We also prove that only a small fraction of variables have large degree. 

\begin{lemma} \label{lem:trevisan-bounded-avg}
    For a $\CSP(\mc{F})$ instance $I = (V,(C_1, \dots C_m))$, let $I'=(V',(C_{(i,l)})_{i \in [m], l \in [B]})$ be the result of applying \Cref{alg:trevisan} to $I'$. Then, with failure probability at most $\delta$, the number of constraint copies in $I'$ adjacent to variables with degree greater than $B/\rho$ is at most $mB\epsilon$ (where $\rho = \frac{\epsilon}{2k}$).
\end{lemma}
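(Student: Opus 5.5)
Our plan is to bound the \emph{expected} number of ``bad'' constraint copies, i.e.\ copies adjacent to some variable copy of degree greater than $B/\rho$, by $mB\epsilon\delta$, and then apply Markov's inequality to get failure probability at most $\delta$. Since each constraint copy has exactly $k$ variable slots, a bad copy is charged to one of its slots holding a high-degree variable copy. So it suffices to bound the expected total degree of high-degree variable copies by $mB\epsilon\delta$. (If $\delta$ is meant as a constant rather than small, we instead aim directly for expectation $mB\epsilon\delta$ with the constants $B,\rho$ chosen accordingly; the structure is the same.)

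\textbf{Degree of a fixed copy.} Fix an original variable $v$ and a copy $(v,j)$, $j\in[\dest(v)]$. Each of the $B\deg_I(v)$ slots where some copy $C_{(i,l)}$ uses $v$ picks $(v,j)$ independently with probability $1/\dest(v)$. Hence $D := \deg_{I'}((v,j))$ is a sum of independent Bernoullis with mean
\[
\mu = \frac{B\deg_I(v)}{\dest(v)} \le \frac{B}{1-\eadv} \le 2B
\]
(a variable appearing twice in one constraint only adds extra independent slots, which does not change this). The quantity we need is $\Ex[D\cdot \one\{D > B/\rho\}]$. Since $B/\rho = 2kB/\epsilon$ is a large multiple of $\mu$, we would bound this tail expectation using either $\Ex[D^2]/(B/\rho) \le (\mu+\mu^2)\rho/B$ or a Chernoff tail sum. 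The second-moment bound gives roughly $\mu\cdot(1+\mu)\rho/B \approx \mu\cdot O(\rho)$, i.e.\ an $O(\rho)=O(\epsilon/k)$ fraction of the expected degree.

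\textbf{Summing.} Summing over all copies, the total expected degree is exactly $kmB$, and the fraction contributed by high-degree copies is at most $O(\rho)$. The expected number of bad constraint copies is therefore at most
\[
\sum_{(v,j)} \Ex\bigl[D\,\one\{D>B/\rho\}\bigr] \le O(\rho)\cdot kmB = O(\epsilon)\, mB .
\]
Markov's inequality then gives the statement with probability $1-\delta$, after adjusting constants. Concretely, we would shrink the threshold constant, or use the Chernoff tail $\Pr[D\ge t]\le (e\mu/t)^t$, which makes the tail expectation exponentially small in $B/\rho$. That bound is far below $\epsilon\delta/k$, because $B/\rho$ is huge relative to $\log(1/\epsilon\delta)$.

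\textbf{Main obstacle.} The delicate step is getting the constant dependence on $\delta$ right: Markov costs a $1/\delta$ factor. The fix is to use the Chernoff tail rather than the second moment. With $\mu\le 2B$ and threshold $2kB/\epsilon$, the tail expectation per slot is $\exp(-\Omega(B/\epsilon))$, which is much smaller than $\epsilon\delta$ for our choice $B=|\Sigma|^{2k}/\epsilon$ and $\delta=O(1/\epsilon)$. If high-probability concentration were needed instead, independence fails across copies of the same variable (they compete for the same slots). We would then use a bounded-differences (McDiarmid) argument: changing one slot's choice changes the bad count by at most $O(B/\rho)$, and there are $kmB$ slots. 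But Markov suffices for a failure probability of $\delta$.
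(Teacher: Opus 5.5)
Your proof is correct, at least when $\delta$ is a fixed constant, which is how the paper's formal theorem uses it. It takes a different route from the paper.

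The paper argues by pigeonhole. The copies of $v$ carry $B\deg_I(v)$ slots, so at most $\rho\deg_I(v)$ copies can have degree above $B/\rho$. It then asserts that a given slot lands on one of these heavy copies with probability at most $\rho/(1-\rho)\le 3\rho/2$. A union bound over the $k$ slots gives $\Ex[X]\le\tfrac{3}{2}mBk\rho$, and Chernoff then gives $X\le 2mBk\rho=mB\epsilon$ except with probability $\exp(-\Omega(mB\epsilon))$.

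You instead bound the tail expectation $\Ex[D\,\one\{D>B/\rho\}]$ of each copy's binomial degree, sum by linearity, and finish with Markov. Your route is the more careful one. The heavy copies are determined by the slots' own choices, so a slot is size-biased toward them; counting heavy copies does not by itself bound the chance that a slot lands on one. The paper's Chernoff step is also applied to dependent indicators. Your tail computation gives exactly the needed estimate and assumes no independence across copies.

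The price is a weaker probability bound. Markov yields failure probability at most $(k/\epsilon)e^{-\Omega(B/\epsilon)}$, so you need $\delta$ at least that large. The paper's ``$\delta=O(1/\epsilon)$'' is an upper bound and does not guarantee this. By contrast, the ``w.h.p.'' in \Cref{thm:main} calls for polynomially small failure.

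Your McDiarmid remark closes that gap and is the rigorous counterpart of the paper's concentration step. Each of the $kmB$ independent slot choices changes the bad count by at most $2(B/\rho+1)+1$. This gives failure probability $\exp(-\Omega(\epsilon^2\rho^2 m/(kB)))$.

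Two small fixes:
\begin{itemize}
\item The second-moment option cannot be rescued by ``shrinking the threshold constant,'' because $\rho=\epsilon/(2k)$ is fixed by the statement. The Chernoff tail is genuinely needed.
\item If $v$ fills several slots of one constraint, $\mu$ can reach $kB/(1-\eadv)$ rather than $2B$. This is harmless, because the threshold $2kB/\epsilon$ already carries the factor $k$.
\end{itemize}
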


\begin{proof}
    Consider a variable $v \in V$ with $\deg_I(v)$ adjacent constraints. Then, there are $B\cdot\deg_I(v)$ constraint copies adjacent to the copies of $v$ in $I'$. This implies that the number of variable copies in $I'$ with degree greater than $B/\rho$ is at most $B\cdot \deg_I(v) \cdot \rho/B = \rho\cdot \deg_I(v)$. Given that there are at least $(1-\rho)\cdot \deg_I(v)$ copies of $v$, the probability that any given constraint copy ends up adjacent to a copy of $v$ with degree greater than $B/\rho$ is at most $\rho\cdot\deg_I(v)/((1-\rho)\cdot\deg_I(v)) \leq 3\rho/2$. By union bound, the probability that a constraint copy ends up adjacent to any variable copy with degree greater than $B/\rho$ is at most $3k\rho/2$. Let $X$ be the number of constraint copies adjacent to such variables. Given that $\Ex[X] \leq 3mBk\rho/2$, we can use Chernoff (\Cref{lem:add-chernoff}) to show the following.
    \[
    \Pr(X \geq 2mBk\rho) \leq \exp\paren*{-\frac{(mBk\rho/2)^2}{3(3mBk\rho/2)}} \leq \exp \paren*{-\frac{mBk\rho}{18}} \leq \exp \paren*{-\frac{mB\epsilon}{36}} \leq \delta. 
    \]
    Thus, since $2mBk\rho = mB\epsilon$ the lemma is shown.
\end{proof}

We now define $I'_{\bounded}$ as the $\CSP$ instance attained from replacing all constraints adjacent to variables with degree greater than $B/\rho=2Bk/\epsilon$ with constraints that always fail. From \Cref{lem:trevisan-bounded-avg}, we know there are at most $mB\epsilon$ constraints replaced and $\val{I'} \geq 1/\card{\Sigma}^k$, so we have
\[
    (1-\epsilon\cdot \card{\Sigma}^k)\cdot\val{I'} \leq \val{I'_{\bounded}} \leq \val{I'}.
\]
Combining this with \Cref{lem:trevisan-accurate} yields that with failure probability at most $2\delta$, we have
\begin{equation}\label{eq:I-bdd-val}
    (1-\epsilon\cdot \card{\Sigma}^k)\cdot\val{I} \leq \val{I'_{\bounded}} \leq (1+\epsilon) \cdot \val{I}.
\end{equation}

Next, we define our offline estimator (\Cref{alg:offline-estimator}) which uses \Cref{alg:trevisan} as a subroutine. It starts by performing the Trevisan Reduction while marking variables in the original instance as either \enquote{high-degree} or \enquote{low-degree} based on $\dest(v)$ which is chosen by the adversary. We also reset the estimate to be exact if the variable is \enquote{low-degree} which is a deviation from \Cref{alg:trevisan}, but still falls under the possible behavior of the adversary, so we can still apply \Cref{lem:trevisan-accurate} and \Cref{lem:trevisan-bounded-avg} later in the analysis. After the Trevisan Reduction is performed, we perform a sampling procedure that will later be simulated by our streaming algorithm. The sampling procedure works by keeping copies of high-degree variables independently with probability $n^{-c}$, and for each low-degree variable, we keep all of its copies with probability $n^{-c}$, but forget them otherwise. Each kept variable is stored in a subset $S' \subseteq V'$, and we run an aggregating function (\Cref{alg:aggregate}) on the induced instance $I'[S']$.

\begin{algorithm}
    \caption{Offline Estimator}
    \label{alg:offline-estimator}
    \KwIn{$\CSP(\mc{F})$ instance $I = (V, (C_1,\dots,C_m))$}
    \KwOut{Estimate $\valest$ of $\val{I}$}
    
    \DontPrintSemicolon
    \SetAlgoSkip{bigskip}
    \SetAlgoInsideSkip{}

    \For {$v \in V$} {
        \tcp{Decide if $v$ is a high or low-degree variable.}
        Adversarially pick an integer value $\dest(v)$ in the range $(1 \pm \eadv)\cdot \deg_I(v)$ \\
        \If{$\dest(v) \leq n^{q+c}/B$} {
            $\dest(v) := \deg_I(v)$ \\
            Mark $v$ as low-degree \\
        }
        \Else{
            Mark $v$ as high-degree \\
        }
    }
    \tcp{Perform Trevisan Reduction.}
    $V' \gets \set{(v,j) \mid v \in V, j \in [\dest(v)]}$\\
    \For{$l \in [B]$}{
        \For{$i \in [m]$}{
            Initialize constraint $C_{(i,l)}$ with predicate equal to that of $C_i$\\
            \For{$t \in [k]$}{
                Suppose $v$ is the $t^{th}$ variable in $C_i$\\
                Uniformly pick $j$ from $[\dest(v)]$ \\
                Let the $t^{th}$ variable of $C_{(i,l)}$ be $(v,j)$ \\
            }
        }
    }
    $I' := (V',(C_{(i,l)})_{i \in [m], l \in [B]})$ \\
    \tcp{Sample a random set of variables.}
    Sample $10\Tmax$-wise independent hash function $H \to [n^c]$ \\
    $S' \gets \emptyset$ \\
    \For{$v \in V$ with $v$ low-degree} {
        \tcp{If low-degree parent is sampled by $H$, store all copies.}
        \If{$H(v) = 1$} {
            $S' \gets S' \cup \set{(v,j) \in V' \mid j \in [\dest(v)]}$ \\
        }
    }
    \For{$v \in V$ with $v$ high-degree} {
        \tcp{Store each high-degree copy independently w.p. $n^{-c}$}
        \For{$j \in [\dest(v)]$} {
            \WithProb{$n^{-c}$} {
                $S' \gets S' \cup \set{(v,j)}$ \\
            }
        }
    }
    \tcp{Gather information necessary for aggregation.}
    Uniformly without replacement select $n^{1-c}$ vertices from $\set{C_{(i,l)} \mid i \in [m], l \in [B]}$. Store these in $\C$. \\
    Define $\degs':S' \to \mathbb{N}$ as an map such that $\degs'(v) := \deg_{I'}(v)$ \\
    $\Out \gets \Aggregate(I'[S'], \degs', \C)$\\
    $\valest \gets \frac{\alpha_{\mc{F}}}{1+2\card{\Sigma}^k\cdot\epsilon} \cdot \Out$ \\
    \KwRet{$\valest$} 

\end{algorithm}

Next, we describe the aggregation step. In this step, we are given an induced sub-instance $I'[S']$ of a Trevisan-reduced instance $I'$ of our original instance $I$, along with an array of degrees $\degs':V' \to \mathbb{N}$, and a set $\C$ of constraints from $I'$. 

We begin by forcing our instance to have bounded-degree by removing all constraints adjacent to variables $v$ with $\degs'(v) > B/\rho$. Call the resulting sub-instance $I'[S']_{\bounded}$. We note that $I'[S']_{\bounded}=I'_{\bounded}[S']$ because they have the same variables ($S'$) and the same constraints (all constraints whose adjacent variables are in $S'$ and have $\deg_{I'}(v)=\degs'[v] \leq B/\rho$). After this, we attempt to run $\aloc$ on the $r$-ball around each constraint in $\C$. If the full $r$-ball of $C_{(i,l)}$ in $I'_{\bounded}$ also is in $I'[S']_{\bounded}$, then we can run $\aloc$ on the neighborhood and contribute this to our average. However, we will fail if any variable in the $r$-ball is not in $S'$. To compensate for this, we scale up by the reciprocal of the probability of success so that our estimate is unbiased. 

The check we use to determine if the $r$-ball of a constraint $C_{(i,l)}$ is stored is for all $v \in N_{I'_{\bounded}}(C_{(i,l)}, r-1)$ we have $\deg_{I'[S']_{\bounded}}(v) = \degs'[v]$. This works because if the entire neighborhood is stored, then these degrees will all be equal, and if there is some missing variable that is not in $S'$, then there will be at least one variable in the $(r-1)$-ball with $\deg_{I'[S']_{\bounded}}(v) < \degs'[v]$.

If we succeed, we must accurately determine the probability of success so we can scale properly. To do this, we have to count the number of \enquote{dependencies} our estimate relied on. In other words, each high-degree copy in the $r$-ball appeared in $S'$ independently with probability $n^{-c}$. Each of these counts for a separate \enquote{dependency} for the estimate because they all needed to be sampled for the estimate to be possible. For low-degree variable copies, each distinct parent of a low-degree copy caused all children to appear in $S'$ with probability $n^{-c}$. Each of these low-degree parents count for a separate \enquote{dependency} for the estimate. We see that our entire $r$-ball is in $S'$ if and only if all \enquote{dependencies} succeed. 

Thus, we can define the number of dependencies of $C_{(i,l)}$ as the total number of high-degree copies in $N_{I'_{\bounded}}(C_{(i,l)}, r)$ plus the number of distinct parents of low-degree copies in $N_{I'_{\bounded}}(C_{(i,l)}, r)$. Call this number $T(C_{(i,l)},r)$. Then, since each dependency succeeds independently with probability $n^{-c}$, the probability that the entire neighborhood of $C_{(i,l)}$ is in $S'$ is $n^{-cT(C_{(i,l)},r)}$.

\begin{algorithm}[H]
    \caption{$\Aggregate(I'[S'], \degs', \C)$}
    \label{alg:aggregate}
    \KwIn{$\CSP(\mc{F})$ instance $I'[S']$, set of degrees $\degs':V' \to \mathbb{N}$, and a set of constraints $\C$}
    \KwOut{Estimate for $\vallp{I'}$}
    
    \DontPrintSemicolon
    \SetAlgoSkip{bigskip}
    \SetAlgoInsideSkip{}

    Remove all constraints adjacent to variables $v$ with $\degs'(v) > B/\rho$. Replace them with dummy constraints that always fail. Correspondingly decrease $\degs'(v)$ for all variables adjacent to deleted constraints. \\
    \tcp{Call the resulting instance $I'[S']_{\bounded}$ and the resulting set of degrees $\degs'_{\bounded}[v]$}
    
    $\total \gets 0$ \\
    \For{$C_{(i,l)} \in \C$}{
        \tcp{Check if the entire neighborhood of $C_{(i,l)}$ is in $S'$}
        \If{$C_{(i,l)} \in I'[S']$ and for all $v \in N_{I'_{\bounded}}(C_{(i,l)}, r-1)$ we have $\deg_{I'[S']_{\bounded}}(v) = \degs'_{\bounded}[v]$} {
            \tcp{$T(C_{(i,l)},r) := $ number of dependencies of $C_{(i,l)}$ as defined above.}
            $\total \gets \total + \aloc(N_{I'_{\bounded}}(C_{(i,l)}, r))\cdot n^{cT(C_{(i,l)},r)}$ \\
        }
    }
    $\Out \gets \total/|\C|$ \\
    \KwRet{$\Out$}    
\end{algorithm}

We can now begin to analyze the offline algorithm by fixing the construction of $I'$ such that the good events in \Cref{lem:trevisan-accurate} and \Cref{lem:trevisan-bounded-avg} hold. We now make the following claim about the expectation of our output.

\begin{lemma}\label{lem:offline-ex}
    It holds that
    \[
    \Ex[\Out] = \vallp{I'_{\bounded}}.
    \]
\end{lemma}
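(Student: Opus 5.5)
We prove the identity by linearity of expectation, taking expectation only over the sampling randomness (the hash function $H$, the independent coin flips for high-degree copies, and the choice of $\C$). The construction of $I'$ is fixed, as stipulated before the lemma. Write $M = mB$ for the number of constraint copies. Then
\[
\Out = \frac{1}{|\C|}\sum_{C_{(i,l)}\in\C} \one[\mathcal{E}_{(i,l)}]\cdot \aloc\paren*{N_{I'_{\bounded}}(C_{(i,l)},r)}\cdot n^{cT(C_{(i,l)},r)},
\]
where $\mathcal{E}_{(i,l)}$ is the event that the check in \Cref{alg:aggregate} succeeds. Since $\C$ is drawn independently of $S'$ and uniformly without replacement, each constraint copy lies in $\C$ with probability $|\C|/M$. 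Hence
\[
\Ex[\Out] = \frac{1}{M}\sum_{i,l} \aloc\paren*{N_{I'_{\bounded}}(C_{(i,l)},r)}\, n^{cT(C_{(i,l)},r)} \Pr[\mathcal{E}_{(i,l)}].
\]
The quantities $\aloc(\cdot)$ and $T(\cdot)$ are deterministic functions of the fixed instance $I'_{\bounded}$. They do not depend on $S'$, because \Cref{alg:aggregate} evaluates them on $N_{I'_{\bounded}}$ only when the check passes, and in that case the ball is fully present.

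The key step is to show $\Pr[\mathcal{E}_{(i,l)}] = n^{-cT(C_{(i,l)},r)}$. First we show that $\mathcal{E}_{(i,l)}$ holds exactly when every variable of $N_{I'_{\bounded}}(C_{(i,l)},r)$ lies in $S'$. The argument goes by induction on radius, using $I'[S']_{\bounded} = I'_{\bounded}[S']$. Every constraint of the $r$-ball is incident to some variable of the $(r-1)$-ball, or it is $C_{(i,l)}$ itself. If all ball variables are in $S'$, then every such constraint survives, and the degrees agree with $\degs'_{\bounded}$. Conversely, suppose some variable of the ball is missing. Take a closest such variable $w$. Then some constraint $C$ on a path from $C_{(i,l)}$ contains $w$ together with a variable $u$ of the $(r-1)$-ball whose own path back to the root is present. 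This $C$ is absent from $I'[S']$, so $\deg_{I'[S']_{\bounded}}(u) < \degs'_{\bounded}[u]$ and the check fails. If $w$ sits directly in $C_{(i,l)}$, the test $C_{(i,l)}\in I'[S']$ fails instead.

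Next, the event that all ball variables lie in $S'$ is the intersection of the $T$ ``dependency'' events. These are $H(v)=1$ for each distinct low-degree parent $v$, and the independent coin of each high-degree copy. The coins are fully independent with probability $n^{-c}$ each. The hash events involve at most $T \le \Tmax$ distinct keys, so $10\Tmax$-wise independence of $H$ makes them jointly independent, each with probability $n^{-c}$. Here we need the bound $T\le\Tmax$, which holds because $I'_{\bounded}$ has maximum degree $B/\rho$ and $\Tmax$ bounds the number of variables in any $r$-ball of such an instance. The two families are independent of each other. Therefore $\Pr[\mathcal{E}_{(i,l)}] = n^{-cT}$, and the scaling factors cancel.

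It remains to identify the resulting average. Substituting gives $\Ex[\Out] = \frac1M\sum_{i,l}\aloc(N_{I'_{\bounded}}(C_{(i,l)},r))$. This is the quantity that \Cref{lem:aloc} relates to $\vallp{I'_{\bounded}}$ for the $B/\rho$-bounded instance $I'_{\bounded}$, where dummy constraints contribute $0$. The stated equality should therefore be read with $\vallp{I'_{\bounded}}$ standing for this local average, which is within $\pm\epsilon$ of the LP value; I would state that explicitly.

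The main obstacle is the equivalence between the degree check and full containment of the ball. It needs care with the replacement of deleted constraints by dummies and with the decreased $\degs'_{\bounded}$. The independence argument is routine once $T\le\Tmax$ is noted.
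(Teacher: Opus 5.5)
Your route is the paper's: decompose $\Out$ over constraint copies, use $\Pr[C_{(i,l)}\in\C]=|\C|/(mB)$ and the independence of $\C$ from $S'$, let the success probability $n^{-cT}$ cancel the scaling, and finish with \Cref{lem:aloc}. You are right that \Cref{lem:aloc} only gives $\bigl|\frac{1}{mB}\sum_{i,l}\aloc(N_{I'_{\bounded}}(C_{(i,l)},r))-\vallp{I'_{\bounded}}\bigr|\le\epsilon$. So the paper's final ``$=\vallp{I'_{\bounded}}$'' is an overstatement, and the extra $\epsilon$ is harmless in the Chebyshev step of \Cref{lem:offline-out}.

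The genuine gap is the forward direction of your check equivalence: ``if all ball variables are in $S'$ \dots\ the degrees agree with $\degs'_{\bounded}$''. \Cref{alg:aggregate} starts from $\degs'(v)=\deg_{I'}(v)$ and can subtract only the deleted constraints it actually sees, i.e.\ those in $I'[S']$. Hence $\degs'_{\bounded}[v]-\deg_{I'[S']_{\bounded}}(v)$ equals the number of $I'$-constraints at $v$ that are not in $I'[S']$, deleted or not. Now let $v\in N_{I'_{\bounded}}(C_{(i,l)},r-1)$ lie in an $I'$-constraint $D$ that was deleted because some $u\in D$ has $\deg_{I'}(u)>B/\rho$. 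Since $D$ is detached in $I'_{\bounded}$, $u$ need not belong to the $I'_{\bounded}$-ball. If $u\notin S'$, the check fails even though the entire ball lies in $S'$.

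So the check event is containment in $S'$ of a strictly larger set. Its probability is $n^{-cT'}$ with $T'\ge T$, and the expected contribution of $C_{(i,l)}$ is $\aloc(\cdot)\,n^{-c(T'-T)}$, essentially $0$ whenever $T'>T$. This is not a lower-order effect. \Cref{lem:trevisan-bounded-avg} only makes the deleted constraints an $\epsilon$-fraction, but each of them lies within radius $r$ of up to roughly $k\Tmax$ constraint copies. Nothing bounds the affected fraction better than about $k\Tmax\epsilon$, which exceeds $1$.

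The paper's proof has the same hole. It defines $X_{(i,l)}$ directly on the event $N_{I'_{\bounded}}(C_{(i,l)},r)\subseteq S'$ and relies on the unproved remark before \Cref{alg:aggregate} that the check detects exactly this event. You located the crux correctly, but your resolution of it is not valid.

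One repair is to let $T$ count the dependencies of a larger set $W$. Take $W$ to be the variables of the $I'_{\bounded}$-ball together with all variables of $I'$-constraints incident to $N_{I'_{\bounded}}(C_{(i,l)},r-1)$. Then the check passes iff $W\subseteq S'$, by your backward argument plus the computation above. All of $W$ is visible when the check passes, so $T'$ is computable. Moreover $|W|\le\Tmax(1+kB/\rho)$, since a variable of $I'$-degree above $B/\rho$ is isolated in $I'_{\bounded}$ and never lies in a ball. Take $H$ with correspondingly higher independence and shrink $c$ by this constant factor. Your argument then goes through verbatim with $W$ in place of the ball.

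Comparing against $\deg_{I'_{\bounded}}(v)$ instead would also restore the equivalence. However, that quantity is not available to \Cref{alg:aggregate} or to the streaming algorithm.
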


\begin{proof}
    For each constraint $C_{(i,j)} \in I'$, we define random variable $X_{(i,l)}$ as $\aloc(N_{I'_{\bounded}}(C_{(i,l)}, r))\cdot n^{cT(C_{(i,l)})}$ if $N_{I'_{\bounded}}(C_{(i,l)}, r) \subseteq S'$, and $0$ otherwise. This represents the contribution to the average of constraint $C_{(i,l)}$. 

    We can see that $\Ex[X_{(i,l)}] = \aloc(N_{I'_{\bounded}}(C_{(i,l)}, r))$ as the event that $N_{I'_{\bounded}}(C_{(i,l)}, r) \subseteq S'$ happens with probability $n^{-cT(C_{(i,l)})}$ as described earlier. If $C_{(i,l)} \notin I'_{\bounded}$, then it was removed and is replaced by a dummy constraint, meaning that $\aloc(N_{I'_{\bounded}}(C_{(i,l)}, r)) = 0$.

    Also create $Y_{(i,l)}$ = $X_{(i,l)}$ if $C_{(i,l)} \in \C$. Then, since the constraints in $|\C|$ are selected uniformly at random without replacement from a total of $mB$ constraints, $\Ex[Y_{(i,l)}] = \frac{|\C|}{mB} \cdot \Ex[X_{(i,l)}] = \frac{|\C|}{mB} \cdot \aloc(N_{I'_{\bounded}}(C_{(i,l)}, r))$. The remainder of the proof comes from simply applying these identities and \Cref{lem:aloc}:
    \begin{align*}
        \Ex[\Out] &= \Ex \bracket*{\frac{1}{|\C|}\displaystyle \sum_{C_{(i,l)} \in \C}  X_{(i,l)}} \\
        &= \Ex \bracket*{\frac{1}{|\C|}\displaystyle \sum_{C_{(i,l)} \in I'}  Y_{(i,l)}} \\
        &= \frac{1}{|\C|}\displaystyle \sum_{i \in [m]} \sum_{l \in [B]} \Ex[Y_{(i,l)}] \\
        &= \frac{1}{|\C|}\displaystyle \sum_{i \in [m]} \sum_{l \in [B]} \frac{|\C|}{mB} \cdot \aloc(N_{I'_{\bounded}}(C_{(i,l)}, r)) \\
        &= \frac{1}{|\C|} \cdot \frac{|\C|}{mB} \cdot mB \cdot \vallp{I'_{\bounded}} = \vallp{I'_{\bounded}}. \qedhere
    \end{align*}
\end{proof}

Now that we know our expectation, we can bound the variance in order to use Chebyshev's inequality (\Cref{lem:chebyshev}) later.

\begin{lemma}\label{lem:offline-var}
    It holds that
    \[
    \Var(\Out) \leq \delta\epsilon^2.
    \]
\end{lemma}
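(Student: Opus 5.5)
We bound $\Var(\Out)$ by conditioning on $\C$ and using the law of total variance,
\[
\Var(\Out) = \Ex\bracket*{\Var(\Out \mid \C)} + \Var\paren*{\Ex[\Out \mid \C]}.
\]
Throughout, $I'$ is fixed and satisfies the good events of \Cref{lem:trevisan-accurate} and \Cref{lem:trevisan-bounded-avg}. Write $a_{(i,l)} := \aloc(N_{I'_{\bounded}}(C_{(i,l)},r)) \in [0,1]$. In $I'_{\bounded}$ every variable has degree at most $B/\rho$, so every $r$-ball contains at most $\Tmax$ variables. Hence $T(C_{(i,l)},r) \leq \Tmax$, and every scaling factor satisfies $n^{cT} \leq n^{c\Tmax} = n^{\epsilon}$.

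\textbf{Second term.} This term is easy. Conditioned on $\C$, we have $\Ex[\Out\mid\C] = \frac{1}{|\C|}\sum_{C\in\C} a_C$, which is the mean of a uniform sample without replacement of size $n^{1-c}$ from values in $[0,1]$. Its variance is at most $1/|\C| = n^{c-1}$, which is $o(\delta\epsilon^2)$ for large $n$.

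\textbf{First term.} Fix $\C$ and write $X_C$ for the scaled indicator contributions from \Cref{lem:offline-ex}. Then
\[
\Var(\Out\mid\C) = \frac{1}{|\C|^2}\sum_{C,C'\in\C}\Cov(X_C,X_{C'}).
\]
The diagonal terms satisfy $\Var(X_C) \leq \Ex[X_C^2] \leq a_C\, n^{cT(C)} \leq n^{\epsilon}$. They therefore contribute at most $n^{\epsilon}/|\C| = n^{\epsilon+c-1}$, which is tiny since $\epsilon + c < 1$.

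For the off-diagonal terms, $X_C$ is a function only of the sampling decisions for its dependencies. These are the hash values $H(v)$ of the low-degree parents in its $r$-ball and the coin flips of the high-degree copies in its $r$-ball. Each constraint has at most $\Tmax$ dependencies. Because $H$ is $10\Tmax$-wise independent, any two constraints involve at most $2\Tmax$ hash values, and these behave as fully independent. Consequently, if $C$ and $C'$ share no dependency, then $X_C$ and $X_{C'}$ are independent and $\Cov(X_C,X_{C'}) = 0$.

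If they do share a dependency, we use the crude bound
\[
\Cov(X_C,X_{C'}) \leq \Ex[X_C X_{C'}] \leq n^{c(T+T')}\Pr[\text{both balls sampled}] = n^{c|D_C\cap D_{C'}|} \leq n^{\epsilon},
\]
where $D_C$ denotes the dependency set of $C$; the equality holds because both balls are sampled exactly when all of $D_C \cup D_{C'}$ succeeds.

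\textbf{Main obstacle.} The hard step is counting the pairs in $\C$ that share a dependency, in expectation over $\C$. A shared high-degree copy is fine: a copy lies in the $r$-balls of at most $(kB/\rho)^{r}$ constraints, which is a constant. A shared low-degree parent $v$ is the real issue, because $v$ has up to $n^{q+c}/B$ copies and so up to $O(n^{q+c}\Tmax)$ constraints depend on $v$.

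So for a fixed $C$, the number of constraints $C'$ sharing some dependency with it is at most $D := \Tmax \cdot O(n^{q+c}) \cdot \Tmax$. Each such $C'$ lands in $\C$ with probability about $|\C|/(mB)$. The expected number of dependent pairs in $\C$ is therefore at most $|\C|^2 D/(mB)$. These pairs contribute at most $n^{\epsilon} D/(mB)$ to the variance. Using $m \geq \Omega(n)$, which may be assumed as in \Cref{sec:input-simplification} up to sampling, this is $O(n^{\epsilon+q+c-1}\Tmax^2)$.

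This is $o(\delta\epsilon^2)$ because $q = \epsilon$ and $c$ are small constants. If the paper's constants make the exponents tight, I would refine the covariance bound to $n^{c|D_C\cap D_{C'}|}$ with $|D_C\cap D_{C'}|$ typically equal to $1$. Summing all three contributions then gives $\Var(\Out) \leq \delta\epsilon^2$ for $n$ sufficiently large.
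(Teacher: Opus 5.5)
Your proof is correct. Its core matches the paper's argument: you expand into covariances, use the $10\Tmax$-wise independence of $H$ to get zero covariance between constraints with disjoint dependency sets, and bound the partners of a fixed constraint by $O(\Tmax^2 n^{q+c})$ via the siblings of a low-degree parent. There are three differences worth noting.

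\emph{Randomness of $\C$.} The paper writes $\Var(\Out)$ directly as $\frac{1}{|\C|^2}\sum_{C_1,C_2\in\C}\Cov(X_{C_1},X_{C_2})$. That expression is really the conditional variance given $\C$, so the paper silently drops the term $\Var(\Ex[\Out\mid\C])$ coming from the random choice of $\C$. Your law-of-total-variance decomposition supplies exactly this missing piece, which is indeed $O(1/|\C|)$. On this point your argument is the more complete one.

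\emph{Per-pair bound.} Your bound $\Ex[X_C X_{C'}] \le n^{c|D_C\cap D_{C'}|}\le n^{\epsilon}$ is sharper than the paper's crude $n^{2c\Tmax}$, which comes from the range of $X_C$. Both suffice.

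\emph{Counting dependent pairs.} You count them in expectation over $\C$; the paper counts them deterministically. In the paper's version, each $C\in\C$ has at most $D = O(\Tmax^2 n^{q+c})$ partners in all of $I'$, so there are at most $|\C| D$ dependent pairs inside $\C$, and their contribution is $n^{\epsilon}D/|\C| = O(\Tmax^2 n^{\epsilon+q+2c-1})$. Your justification here is slightly off: \Cref{claim:linear-edges} gives the upper bound $m = O(n/\epsilon^2)$, not a lower bound $m=\Omega(n)$. You do not need a lower bound on $m$, though. Since $|\C|\le mB$ is required for the sample to be defined, your expected count $|\C|^2 D/(mB)$ is already at most $|\C| D$, which recovers the paper's bound.

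Finally, the hedge in your last sentence is unnecessary. With $q=\epsilon$ and $c=\epsilon/\Tmax$, every exponent you obtain ($\epsilon+c-1$, $\epsilon+q+2c-1$, $c-1$) is a negative constant for small $\epsilon$.
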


\begin{proof}
    We begin by substituting the value of $\Out$ and using basic properties of variance and covariance.
    \begin{equation}\label{eq:var-decomp}
        \begin{aligned}
            \Var(\Out) &= \Var \paren*{\frac{1}{|\C|}\displaystyle \sum_{C_{(i,l)} \in \C}  X_{(i,l)}} \\
            &= \frac{1}{|\C|^2} \cdot \sum_{C_{(i_1,l_1)} \in \C} \sum_{C_{(i_2,l_2)} \in \C} \Cov(X_{(i_1,l_1)}, X_{(i_2,l_2)}). \\
        \end{aligned}
    \end{equation}
    Now, we upper-bound the number of pairs for which $\Cov(X_{(i_1,l_1)}, X_{(i_2,l_2)})$ is non-zero. We see that two indicator variables $X_{(i_1,l_1)}$ and $X_{(i_2,l_2)}$ are dependent if and only if their probability of success is correlated. This can happen if their neighborhoods share a variable, or if there is some low-degree variable in $I$ containing a copy in each neighborhood. In other words, $X_{(i_1,l_1)}$ and $X_{(i_2,l_2)}$ are correlated if and only if they share a dependency.

    We now consider a single constraint $C_{(i,l)}$ and upper-bound the number of other constraints that share a dependency. Each constraint has at most $\Tmax$ variables in its neighborhood, and each of these variables is in the neighborhood of at most $\Tmax$ other constraints. Thus, there are at most $\Tmax^2$ constraints who's neighborhoods intersect the neighborhood of $C_{(i,l)}$. Next, consider a variable $v$ with low-degree parent in the neighborhood of $C_{(i,l)}$. $v$ can have at most $n^{q+c}/B < n^{q+c}$ siblings, each a member of at most $\Tmax$ different neighborhoods of constraints. Thus, we can upper-bound the number of other constraints sharing a dependency with $C_{(i,l)}$ by $\Tmax^2 + \Tmax^2\cdot n^{q+c} \leq 2\Tmax^2\cdot n^{q+c}$. 

    Also, since $\aloc$ is always between $0$ and $1$, $X_{(i,l)}$ always ranges between $0$ and $n^{c\Tmax}$, meaning the maximum possible covariance between two dependent constraints is $n^{2c\Tmax}$ (due to the $10\Tmax$-wise independence of $H$).

    Combining this with \Cref{eq:var-decomp} gives an upper bound for $\Var(\Out)$.
    \begin{align*}
        \Var(\Out) &\leq \frac{1}{|\C|^2} \cdot (|\C| \cdot 2\Tmax^2\cdot n^{q+c}) \cdot n^{2c\Tmax} \leq \frac{2\Tmax^2\cdot n^{q+c+2c\Tmax}}{n^{1-c}} \leq \delta\epsilon^2. \qedhere
    \end{align*}
\end{proof}

\begin{lemma} \label{lem:offline-out}
    For an instance $I$ of $\CSP(\mc{F})$. Let $\valest$ be as in the output of \Cref{alg:offline-estimator}. Then with probability $1-3\delta$,
    \[
    (\alpha_{\mc{F}}-4\epsilon \card{\Sigma}^k) \cdot \val{I} \leq \valest \leq \cdot \val{I}.
    \]
\end{lemma}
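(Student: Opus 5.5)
We will combine the three earlier ingredients: the structural bounds on $I'_{\bounded}$ in \eqref{eq:I-bdd-val}, the unbiasedness of $\Out$ from \Cref{lem:offline-ex}, and the variance bound of \Cref{lem:offline-var}, together with \Cref{lem:aloc} and the definition of $\alpha_{\mc{F}}$. The failure probability $3\delta$ will come from a union bound: $2\delta$ for the bad events of \Cref{lem:trevisan-accurate} and \Cref{lem:trevisan-bounded-avg} (which give \eqref{eq:I-bdd-val}), and $\delta$ for a Chebyshev deviation of $\Out$.

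First, we condition on the construction of $I'$ being good, so that \eqref{eq:I-bdd-val} holds. Then, by Chebyshev (\Cref{lem:chebyshev}) with $\Var(\Out)\le \delta\epsilon^2$, with probability at least $1-\delta$ we have $|\Out - \Ex[\Out]|\le \epsilon$. Strictly, \Cref{lem:offline-ex} should be read as $\Ex[\Out] = \frac{1}{mB}\sum_{i,l}\aloc(N_{I'_{\bounded}}(C_{(i,l)},r))$. Since $I'_{\bounded}$ has maximum degree $B/\rho$, \Cref{lem:aloc} (applied with degree bound $B/\rho$, which is how $r$ and $\Tmax$ were chosen) gives that this average lies in $\vallp{I'_{\bounded}}\pm\epsilon$. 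Hence $\Out \in \vallp{I'_{\bounded}} \pm 2\epsilon$.

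Next, we translate this into bounds on $\val{I}$.

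\emph{Upper bound.} We have $\Out \le \vallp{I'_{\bounded}} + 2\epsilon \le \val{I'_{\bounded}}/\alpha_{\mc{F}} + 2\epsilon$, using the definition of the integrality gap. By \eqref{eq:I-bdd-val} this is at most $(1+\epsilon)\val{I}/\alpha_{\mc{F}} + 2\epsilon$. Since $\val{I}\ge 1/\card{\Sigma}^k$, the additive $2\epsilon$ is at most $2\epsilon\card{\Sigma}^k\val{I}$, which is at most $2\epsilon\card{\Sigma}^k\val{I}/\alpha_{\mc{F}}$. Thus $\Out \le (1+\epsilon+2\epsilon\card{\Sigma}^k)\val{I}/\alpha_{\mc{F}}$. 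Multiplying by the scaling $\alpha_{\mc{F}}/(1+2\card{\Sigma}^k\epsilon)$ gives $\valest\le \val{I}$, up to absorbing the extra $\epsilon$ into the constant. We expect this bookkeeping to be the delicate point: the normalization constant in \Cref{alg:offline-estimator} must dominate $(1+\epsilon)(1+\ldots)$ exactly. If it does not quite, we will rescale $\epsilon$ in the intermediate lemmas, for instance by invoking \Cref{lem:aloc} and Chebyshev with $\epsilon/2$, so that the total additive error is at most $\epsilon$ and the factors line up with $1+2\card{\Sigma}^k\epsilon$.

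\emph{Lower bound.} We have $\Out\ge \vallp{I'_{\bounded}}-2\epsilon \ge \val{I'_{\bounded}}-2\epsilon$. By \eqref{eq:I-bdd-val} this is at least $(1-\epsilon\card{\Sigma}^k)\val{I} - 2\epsilon\card{\Sigma}^k\val{I} = (1-3\epsilon\card{\Sigma}^k)\val{I}$. Multiplying by $\alpha_{\mc{F}}/(1+2\card{\Sigma}^k\epsilon)\ge \alpha_{\mc{F}}(1-2\card{\Sigma}^k\epsilon)$ gives
\[
\valest \ge \alpha_{\mc{F}}(1-5\epsilon\card{\Sigma}^k)\val{I}.
\]
The target $(\alpha_{\mc{F}}-4\epsilon\card{\Sigma}^k)\val{I}$ then follows using $\alpha_{\mc{F}}\le 1$, again after a slight tightening of the constants, for example running Chebyshev at accuracy $\epsilon/2$. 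Apart from this constant tuning, every step is a direct invocation of earlier results.
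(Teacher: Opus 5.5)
Your proposal is correct and follows the paper's route: Chebyshev via \Cref{lem:offline-var}, a union bound with the $2\delta$ failure of \eqref{eq:I-bdd-val}, the integrality gap, $\val{I}\ge 1/\card{\Sigma}^k$ to turn additive into multiplicative error, and the final rescaling. It is also more careful than the paper, whose \Cref{lem:offline-ex} asserts $\Ex[\Out]=\vallp{I'_{\bounded}}$ exactly and so silently drops the $\pm\epsilon$ of \Cref{lem:aloc}, which you retain; running both \Cref{lem:aloc} and Chebyshev at accuracy $\epsilon/2$ recovers exactly the paper's bounds $(1-2\epsilon\card{\Sigma}^k)\val{I}\le\Out\le(1+2\epsilon\card{\Sigma}^k)\val{I}/\alpha_{\mc{F}}$, but both really must be halved (halving only the Chebyshev accuracy, as your lower-bound paragraph suggests, leaves roughly $\alpha_{\mc{F}}(1-\tfrac{9}{2}\epsilon\card{\Sigma}^k)$, short of the target when $\alpha_{\mc{F}}$ is near $1$), and the Chebyshev failure probability stays $\delta$ only because the variance bound proved in \Cref{lem:offline-var} is actually $n^{-\Omega(1)}$.
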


\begin{proof}
    From \Cref{lem:offline-ex} and \Cref{lem:offline-var}, we can apply Chebyshev's inequality (\Cref{lem:chebyshev}).
    \[
    \Pr \paren*{\abs*{\Out - \vallp{I'_{\bounded}}} \geq \epsilon} \leq \frac{\Var(\Out)}{\epsilon^2} \leq \delta.
    \]

    We condition on $\Out$ being within $\epsilon$ of $\vallp{I'_{\bounded}}$ which happens with probability $1-\delta$. From this along with the fact that $\vallp{I'_{\bounded}} \geq \val{I'_\bounded} \geq 1/\card{\Sigma}^k$ and \Cref{eq:I-bdd-val}, we have the following lower bound for $\Out$ which happens with probability $1-3\delta$ by union bound.
    \begin{equation}\label{eq:out-lower}
        \Out \geq \vallp{I'_{\bounded}}-\epsilon \geq (1-\epsilon \card{\Sigma}^k) \cdot \vallp{I'_{\bounded}} \geq (1-\epsilon \card{\Sigma}^k) \cdot \val{I'_{\bounded}} \geq (1-2\epsilon \card{\Sigma}^k) \cdot \val{I}.
    \end{equation}

    The upper-bound is similar and uses the additional fact that $\alpha_{\mc{F}}\cdot\vallp{I'_\bounded} \leq \val{I'_{\bounded}}$ (\Cref{def:int-gap}).
    \begin{equation}\label{eq:out-upper}
        \Out \leq \vallp{I'_{\bounded}} + \epsilon \leq (1+\card{\Sigma}^k\epsilon)\cdot \vallp{I'_{\bounded}} \leq \frac{(1+\card{\Sigma}^k\epsilon)}{\alpha_{\mc{F}}}\cdot \val{I'_{\bounded}} \leq \frac{(1+2\card{\Sigma}^k\epsilon)}{\alpha_{\mc{F}}}\cdot \val{I}.
    \end{equation}

    Now, since $\valest = \frac{\alpha_{\mc{F}}}{1+2\card{\Sigma}^k\cdot\epsilon} \cdot \Out$, we can combine this with \Cref{eq:out-lower} and \Cref{eq:out-upper} to prove the lemma.
    \begin{align*}
        (1-2\epsilon \card{\Sigma}^k) \cdot \frac{\alpha_{\mc{F}}}{1+2\epsilon \card{\Sigma}^k} \cdot \val{I} &\leq \valest \leq \frac{(1+2\card{\Sigma}^k\epsilon)}{\alpha_{\mc{F}}}\cdot \frac{\alpha_{\mc{F}}}{1+2\epsilon \card{\Sigma}^k} \cdot \val{I} \\
        (\alpha_{\mc{F}}-4\epsilon \card{\Sigma}^k) \cdot \val{I} &\leq \valest \leq \val{I}. \qedhere
    \end{align*}
    
\end{proof}
\section{Streaming Implementation}\label{section:streaming}

In this section, we give a streaming implementation of the algorithm from the previous section in one pass and sublinear space in $n$. The algorithm has three major steps: Sketching, reducing, and aggregating. We note that the aggregation step is exactly the same offline and online, and uses the same algorithm (\Cref{alg:aggregate}) as this step is possible in sublinear space. Because of this, the goal of the streaming algorithm is to match the inputs needed to this aggregate function. To do this, we gather all information we need from the stream in a sketch in the body of \Cref{alg:streaming-estimator}, then perform an online version of our offline estimator in \Cref{alg:streaming-reduction}, lastly we end up with a subsampled and reduced version of our original instance $I$ that we can use as input to the aggregate function (\Cref{alg:aggregate}). Again, the behavior we want to mimic is the offline estimator which works by first performing the Trevisan reduction, then subsampling vertices and taking the induced graph on those vertices. The key that makes this possible to implement in the streaming setting is the difference in how we treat low and high-degree vertices.

The sketching portion of the algorithm, \Cref{alg:streaming-estimator}, works by storing several different types of variables and constraints from the stream. The first piece of information we need is a set of variables sampled each with probability $n^{-c}$. This is not possible to do in a streaming setting directly, but we can sample a random $10\Tmax$-wise independent hash function $H:V \to [n^c]$ that can mimic sampling with probability $n^{-c}$ because for each variable $v$, the event that $H(v)=1$ happens with probability $n^{-c}$. As we go through the stream of constraints, we examine each variable $v$ adjacent to that constraint and if $H(v) = 1$, we add that variable to a set $S$. In addition, we keep a degree counter $\degs$ for each variable in $S$.

The second piece of information we need to store are the constraints that will be necessary for the reduction. In the Trevisan reduction, each constraint is copied $B$ times, so we make these constraint copies during the stream by creating constraints $C_{(i,l)}$ that are copies of $C_i$ for $l \in [B]$. These constraints have the same predicate as $C_i$, and the variables adjacent to $C_{(i,l)}$ will correspond to the variables adjacent to $C_i$ similarly to in the offline algorithm. We store all the constraint copies we will need in a set $F$.

First, we keep any constraint adjacent to a variable in $S$. These constraints will be used to distribute over the copies of the low-degree variables in $S$. Second, we store constraint-variable pairs in two sets $G$ and $\Gest$. These constraints will be used to distribute over the copies of high-degree variables. The way this works is for each constraint copy $C_{(i,l)}$ and each variable adjacent to constraint $C_i$ indexed by $t \in [k]$, we store the pair $(C_{(i,l)},t)$ with probability $2n^{-c}$. We do the same process independently for $\Gest$ but with probability $n^{-c}$. Then, for each constraint copy that is a member of some pair in $G$, we store also store it in $F$. The reason for having two separate sets $G$ and $\Gest$ we be expanded upon later, but the reason is it help decouple the randomness of estimating the degrees of high-degree vertices with the randomness of the distributing of its adjacent constraints.

The final piece of information we gather is a uniformly random set of constraint copies of size $n^{1-c}$ that we obtain using reservoir sampling. This is a technique to get a desired number of samples uniformly without replacement from a stream originally described in \cite{vitter1985random}. The technique is described more in \Cref{sec:reservoir-sampling}, but the outcome as it applies to our algorithm is we use two functions $\InitReservoir$ and $\UpdateReservoir$ and at the end of the algorithm we have a sample of $n^{1-c}$ constraint copies selected uniformly at random with replacement stored in a set $\C$. This method of sampling is not strictly required for the algorithm to work, but it makes some of the analysis a bit simpler. 

After all necessary information is stored during the stream, the information is passed along to \Cref{alg:streaming-reduction} which returns a subsampled and reduced version $I'$ of our original instance $I$ that can be passed along to \Cref{alg:aggregate} to aggregate and estimate the $\maxCSP$ value of $I$.

\begin{algorithm}
    \caption{Streaming CSP Estimator}
    \label{alg:streaming-estimator}
    \KwIn{$\CSP(\mc{F})$ instance $I = (V, (C_1,\dots,C_m))$}
    \KwOut{Estimate $\valest$ of $\val{I}$}
    
    \DontPrintSemicolon
    \SetAlgoSkip{bigskip}
    \SetAlgoInsideSkip{}
    Sample $10\Tmax$-wise independent hash function $H: V \to [n^c]$. \\
    \tcp{$S$ is a set of variables, $F$ is a set of constraint copies, $G, \Gest$ are sets of constraint-variable pairs, and $\degs$ is a map from $S \to \mathbb{N}$.}
    $S \gets \emptyset, F \gets \emptyset, G \gets \emptyset, \Gest \gets \emptyset, \degs \gets []$\\
    \tcp{$\C$ is a reservoir of constraints defined by the data-structure in \Cref{sec:reservoir-sampling}.}
    $\C \gets \InitReservoir(n^{1-c})$ \\
    \For{$i \in [m]$} {
        \For{$t \in [k]$} {
            \If{$H(C_i^t) = 1$ and $C_i^t \notin S$} {
                \tcp{Hash succeeds with probability $n^{-c}$.}
                $S \gets S \cup \set{C_i^t}$ \\
                $\degs[C_i^t] \gets 0$ \\
            }
        }
        \For{$l \in [B]$}{
            \tcp{Loop through each copy of the constraint.}
            Define constraint $C_{(i,l)}$ with predicate equal to that of $C_i$, but wait to set its variables. \\ 
            $\UpdateReservoir(\C, C_{(i,l)})$ \\
            \For{$t \in [k]$}{
                \WithProb{$2n^{-c}$}{
                    $G \gets G \cup (C_{(i,l)},t)$ \\
                }
                \WithProb{$n^{-c}$}{
                    $\Gest \gets \Gest \cup (C_{(i,l)},t)$ \\
                }
                \If{$C_i^t \in S$} {
                    $\degs[C_i^t] \gets \degs[C_i^t] + 1$ \\
                }
            }
            \tcp{Keep the constraint if adjacent to a variable in $S$ or sampled in $G$}
            \If{$C_i^t \in S$ or $(C_{(i,l)},t) \in G$ for any $t \in [k]$} {
                $F \gets F \cup C_{(i,l)}$ \\
            }
        }
    }

    $(I', \degs') \gets \StreamingReduction(S, F, G, \Gest, \degs, \C)$\\ 
    $\Out \gets \Aggregate(I', \degs', \C)$ \\
    $\valest \gets \frac{\alpha_{\mc{F}}}{1+2\card{\Sigma}^k\cdot\epsilon} \cdot \Out$ \\
    \KwRet{$\valest$} 

\end{algorithm}

After the sketching is complete, we must mimic the subsampled Trevisan reduction in \Cref{alg:offline-estimator} using the information we have from the sketch. To do this, we first define some notation. Let $\deg_{G}(v) = |\{(C_{(i,l)},t) \in G \mid C_i^t = v\}|$ and let $G_v$ be the set of variables referred to in $G$, i.e. $v \in G_v$ when there exists $(C_{(i,l)},t) \in G$ with $C_i^t = v$. We also define the same for $\Gest$. 

To perform the subsampled reduction, we first use the threshold of $\deg_{\Gest}(v) = n^q$ to determine if a variable is high-degree or low-degree. If $\deg_{\Gest}(v) \leq n^q$ then our variable is low-degree, otherwise it is high-degree. We use $\deg_{\Gest}(v)$ because it is a good proxy for the degree of $v$, in fact for high-degree variables, we will show later that $\dest(v) = \deg_{\Gest}(v)\cdot n^c/B$ is within an $\epsilon$ factor of $\deg_I(v)$. This degree estimate and cutoff between low and high-degree variables is the only purpose of $\Gest$, as it makes it so that everything we do with the constraints in $G$ is independent of this estimate. 

Now that we have a way to determine if a variable is low or high-degree, we describe how the reduction is performed in each case. For low-degree variables, we only deal with variables in $S$. For each such variable, we have that variable and all neighboring constraint copies stored. Because of this, we can simply perform the reduction as we would offline where we create $\degs[v]$ vertex copies labeled $(v,j)$ for $j$ from $1$ to $\degs[v]$. Then, we take each constraint copy $C_{(i,l)}$ and index $t \in [k]$ such that $C_i^t = v$ and set $C_{(i,l)}^t$ to a uniformly random variable copy $(v,j)$. This simulates the offline algorithm as, offline, the reduction happens first, then each copy of a low-degree variable is \enquote{sampled} only if the parent is sampled. Online, we first sample the parent and store the constraints, then distribute the adjacent constraints over the variable copies. 

For high-degree variables, we estimate the degree of variable $v$ by $\dest(v) = \deg_{\Gest}(v)\cdot n^c/B$. Then, we simulate the process of sampling variable copies with probability $n^{-c}$ by looping from $1$ to $\dest(v)$ and creating a variable copy with this probability. Then, after the variable copies are created, we need to distribute the constraints over these variables. Offline, this is done by distributing first in the reduction, then subsampling. Online, we subsample variable copies first, then we re-sample the same constraint-variable pairs in $G$ that were previously sampled with probability $2n^{-c}$ again with probability $j/(\dest(v)\cdot 2n^{-c})$ where $j$ is the number of succeeding variable copies from the previous step. This gives an overall probability of success for each constraint copy of $j/\dest(v)$. Then, if the constraint copy succeeds, it is randomly distributed over one of the corresponding variable copies.

After constraint copies are distributed, the final step is to remove all constraint copies in $F$ that have unassigned variables. This way we only keep constraints that have all adjacent variable copies sampled, similarly to the offline algorithm which takes the induced instance over sampled variables. 

\begin{algorithm}
    \caption{Streaming Reduction}
    \label{alg:streaming-reduction}
    \KwIn{Stored information $S, F, G, \Gest, \degs$, and $\C$ from $\StreamingEstimator$}
    \KwOut{Reduced instance $I'$ and a map of degrees $\degs'$.}
    
    \DontPrintSemicolon
    \SetAlgoSkip{bigskip}
    \SetAlgoInsideSkip{}

    $V' \gets \emptyset$ \\
    $\degs' \gets []$ (hashmap) \\
    \tcp{Low-degree variables}
    \For{$v \in S$ with $\deg_{\Gest}(v) \leq n^q$}{
        Create $(v,j)$ for $j \in \degs[v]$ \\
        $V' \gets V' \cup \set{(v,j) \mid j \in [\degs[v]]}$ \\
        $\degs'[(v,j)] \gets 0$ for $j \in \degs[v]$ \\
        \tcp{Distribute adjacent constraints over low-degree variable copies.}
        \For{$C_{(i,l)} \in F, t \in k$ such that $C_i^t=v$}{
            Select $j$ randomly from $1$ to $\degs[v]$ \\
            $C_{(i,l)}^t \gets (v,j)$ \\
            $\degs'[(v,j)] \gets \degs'[(v,j)] + 1$ \\
        }
    }
    \tcp{High-degree variables}
    \For{$v\in G_v$ with $\deg_{\Gest}(v) > n^q$}{
        $\dest(v) \gets \deg_{\Gest}(v)\cdot n^c/B$ \\
        $j \gets 0$ \\
        \tcp{Simulate $n^{-c}$ fraction of copies being stored.}
        \For{$\_ \in [\dest(v)]$}{
            \WithProb{$n^{-c}$}{
                $j \gets j + 1$ \\
                Create $(v,j)$ \\
                $V' \gets V' \cup \set{(v,j)}$ \\
                $\degs'[(v,j)] \gets 0$ \\
            }
        }
        \tcp{Conditionally distribute adjacent constraints over succeeding copies.}
        \For{$(C_{(i,l)},t) \in G$ such that $C_i^t=v$}{
            \WithProb{$j/(\dest(v)\cdot 2n^{-c})$}{
                Select $j'$ randomly from $1$ to $j$. \\
                $C_{(i,l)}^t \gets (v,j')$ \\
                $\degs'[(v,j')] \gets \degs'[(v,j')] + 1$ \\
            }
        }
    }
    Remove all constraints from $F$ that have unassigned variables. \\
    \KwRet{$(I'=(V', F), \degs')$}
\end{algorithm}

\subsection{Analysis of the Streaming Algorithm}

We begin our analysis with the space complexity. It is clear that $\C$ requires space $O(n^{1-c})$ always, and $S$ and $\degs$ will each require space $O(n^{1-c})$ w.h.p. as each variable is in $S$ with probability $n^{-c}$.\footnote{Full independence between the events of these events is not necessary here, pairwise independence guaranteed by our hash function is sufficient for this claim via Chebyshev. We gloss over the proof of this as we prove an even stronger claim bounding the space of $F$.} $G$ and $\Gest$ will each require space $O(kBmn^{-c}) =O(n^{1-c})$ w.h.p. as each constraint has $kB$ independent chances to be stored in each of $G$ and $\Gest$ with probability $n^{-c}$ and $m$ is linear in $n$ by \Cref{claim:linear-edges}. 

The final structures we need to bound are $V'$, $\degs'$, and $F$. There are two contributions to the size of $V'$ (and correspondingly $\degs'$), copies of low-degree variables and copies of high-degree variables. We first count the number of copies of high-degree variables. A copy of a high-degree variable $v$ is added to $V'$ with probability $n^{-c}$ independently, a total of $\dest(v)$ times. We will show later (\Cref{claim:couple-G-adv}) that $\dest(v) \leq (1+\epsilon)\cdot \deg_I(v)$ for all high-degree $v$ with high probability and the sum of degrees over all variables is at most $mk$. Thus, with high probability, the number of high-degree copies in $V'$ is $O(mkn^{-c}) = O(n^{1-c})$ (ignoring constant factors). The number of copies of low-degree vertices in $V'$ is strictly less than the size of $F$, so it suffices to bound the size of $F$ asymptotically. This is because the number of low-degree copies in $V'$ is the sum of $\degs[v] = \deg_I(v)$ for each low-degree $v$ in $S$, and $F$ contains $B$ copies of every constraint adjacent to any variable in $S$.

To bound the size of $F$, we use a technique similar to the proof of \Cref{lem:offline-var}. We create indicator variables $X_{(i,l)}$ for each constraint $C_{(i,l)}$ which take the value $1$ if $C_{(i,l)}$ is added to $F$ in the sketch. This happens if, for any $t \in [k]$, $C_i^t \in S$ or $(C_{(i,l)},t) \in G$. The former of these probabilities happens with probability $n^{-c}$ and the latter with probability $2n^{-c}$. Thus, by union bound, $\Ex[X_{(i,l)}] \leq 3kn^{-c}$ for each constraint $C_{(i,l)}$. Let $X$ be the sum of these indicator variables, so $\Ex[X] \leq 3mBkn^{-c}$. Now, we bound variance by expanding the sum of indicator variables to a sum of covariances.
\[
\Var(X) = \Var \paren*{\sum_{i \in [m], l \in [B]} X_{(i,l)}} = \sum_{i_1 \in [m], l_1 \in [B]} \sum_{i_2 \in [m], l_2 \in [B]} \Cov \paren*{X_{(i_1,l_1)},X_{(i_2,l_2)}}.
\]

Since these are indicator variables with expectation bounded by $3kn^{-c}$, their covariance must also be bounded by $3kn^{-c}$. We can then use a crude bound that all pairs of constraints are covaried, giving us $\Var(X) \leq (mB)^2\cdot 3kn^{-c}$. Then we can use a one-sided version of Chebyshev's inequality, but with a loose bound of $2\cdot 3mBkn^{-c/3} \geq 2\cdot 3mBkn^{-c} \geq 2\Ex[X]$. 
\[
\Pr(X \geq 2\cdot 3mBkn^{-c/3}) \leq \frac{\Var(X)}{(3mBkn^{-c/3})^2} \leq \frac{(mB)^2\cdot 3kn^{-c}}{(3mBkn^{-c/3})^2} \leq \frac{1}{3kn^{c/3}}.
\]

Thus, with high probability, we can see that $|F| \leq 2\cdot 3mBkn^{-c/3} = O(n^{1-c/3})$ (ignoring constant factors and using the fact that $m = O(n/\epsilon^2)$). Putting everything together, we see that with high probability, our total space is $O(n^{1-c/3}) = O(n^{1-\Omega_\epsilon(1)})$, and we say that we fail with probability at most $\delta$.

We also note that aside from failing from too much space used, there is one condition in \Cref{alg:streaming-reduction} that can cause us to fail, namely the re-sampling of constraints in $G$ with probability $j/(\dest(v)\cdot 2Bn^{-c})$. If this probability is greater than $1$ then it is not possible to sample with that probability, so we say that the algorithm terminates in this case. The good thing is this happens with low probability because $j/(\dest(v)\cdot 2n^{-c}) > 1$ when $j > (\deg_{\Gest}(v)\cdot n^c/B) \cdot 2n^{-c} = 2\deg_{\Gest}(v)/B$. We can write $j$ as a sum of $\deg_{\Gest}(v)\cdot n^c/B$ independent Bernoulli random variables each with probability $n^{-c}$. Then since $\Ex[j] = \deg_{\Gest}(v)/B$, we can use \Cref{lem:mult-chernoff} to see the following:
\[
\Pr(j > 2\deg_{\Gest}(v)/B) \leq \exp (-\deg_{\Gest}(v)/3B).
\]

Since $\deg_{\Gest}(v) > n^q$, this probability is less than $\delta/mBk$. Union bounding over all possible $(C_{(i,l)},t)$ gives an upper bound of $\delta$ for the probability that any constraint-variable pair fails in this way.

\subsection{Coupling between online and offline algorithms}

We now give a coupling between the result of \Cref{alg:streaming-estimator} and \Cref{alg:offline-estimator} when applied to the same input. Specifically, we can couple the randomness in the two algorithms such that their results are equal with high-probability. To do this, we must list all sources of randomness from both algorithms, define how they are coupled, and show that this yields the same outcome with high probability. It also should be noted that the result of the streaming algorithm is independent of the stream order.\footnote{Formally, we claim that the distribution of possible outputs of the streaming algorithm is independent of the stream order. This is true because the output depends only on the sketch $(S,F,G,\Gest,\degs,\C)$, and this sketch is also independent of the order. $S$ and $\degs$ are determined only by the hash function and the structure of the input, $\C$ is independent due to the nature of reservoir sampling, and $F,G,$ and $\Gest$ are unordered sets where each constraint or constraint-variable pair is added independently.} This is important as it implies that the output of the online and offline algorithms are dependent only on the structure of the input and the internal randomness.

Occasionally to avoid ambiguity we mark variables and constants that exist in both the online and offline algorithms with a corresponding \enquote{on} or \enquote{off} subscript (Ex: $\dest(v)_{\on}, \dest(v)_{\off}$). This is omitted when the focus is clearly on one algorithm or the other. We seek to prove the following lemma:

\begin{lemma}\label{lem:coupling}
    For a given input $I = (V, (C_1, \dots, C_m))$ to both \Cref{alg:offline-estimator} and \Cref{alg:streaming-estimator}, there is a coupling between the distributions of their outputs $\Out_{\off}$ and $\Out_{\on}$ such that $\Out_{\off} =\Out_{\on}$ with probability at greater than $1-3\delta$.
\end{lemma}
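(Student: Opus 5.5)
The plan is to build the coupling by matching the randomness of the two algorithms piece by piece, so that $\Aggregate$ receives identical inputs $(I'[S'],\degs',\C)$ in both runs outside a few bad events, whose total probability will be at most $3\delta$. Since the offline $\dest(v)$ is chosen adversarially in $(1\pm\eadv)\deg_I(v)$, I will let the "adversary" use the online estimate. That is, I set $\dest(v)_{\off} := \deg_{\Gest}(v)\cdot n^c/B$, and a variable is marked high-degree offline exactly when $\deg_{\Gest}(v) > n^q$, which matches the offline test $\dest(v) > n^{q+c}/B$. Low-degree variables get $\dest(v)=\deg_I(v)$ in both runs. The first step is the claim referenced earlier (\Cref{claim:couple-G-adv}): w.h.p., every $v$ with $\deg_{\Gest}(v)>n^q$ satisfies $\deg_{\Gest}(v)\,n^c/B \in (1\pm\eadv)\deg_I(v)$. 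To prove it, note that $\deg_{\Gest}(v)$ is $\mathrm{Bin}(B\deg_I(v), n^{-c})$, so \Cref{lem:mult-chernoff} and a union bound over $v$ apply. If $\deg_I(v)$ is small, then $\deg_{\Gest}(v)>n^q$ is itself unlikely, and Chernoff shows that whenever $\deg_{\Gest}(v)>n^q$ the mean is polynomially large. This event costs $\delta$. Crucially, $\Gest$ is independent of $G$, $H$ and the reservoir, so conditioning on $\Gest$ leaves the rest of the randomness intact, and the offline adversarial choice is legitimate.

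Next I couple the remaining sources. I use the same hash $H$ in both runs and the same reservoir sample $\C$; offline $\C$ is uniform without replacement by \Cref{sec:reservoir-sampling}. For a low-degree $v$ with $H(v)=1$, online $F$ contains every copy $C_{(i,l)}$ adjacent to $v$, and $\degs[v]=\deg_I(v)$. Each slot $(C_{(i,l)},t)$ with $C_i^t=v$ receives a uniform copy in $[\deg_I(v)]$, exactly as in the offline reduction, so I use the same uniform choices. For low-degree $v$ with $H(v)\ne1$, neither run keeps any copy of $v$, and the offline assignments for these copies are irrelevant to $I'[S']$.

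The main obstacle is the high-degree variables. Here I will show that, for fixed $v$, the joint law of (the set of surviving copies of $v$, and which slots $(C_{(i,l)},t)$ with $C_i^t=v$ land on a surviving copy and on which one) is the same in both runs. Offline, each copy survives independently with probability $n^{-c}$, so the survivor count $J$ is $\mathrm{Bin}(\dest(v),n^{-c})$. Given $J=j$, each slot independently lands in the surviving set with probability $j/\dest(v)$, and when it does it picks a uniform survivor. Online, $J$ has the same law. Each slot lies in $G$ with probability $2n^{-c}$ independently and is then retained with probability $j/(2n^{-c}\dest(v))$, giving total probability $j/\dest(v)$, after which it picks a uniform survivor. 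Relabeling survivors as $1,\dots,j$ is harmless up to isomorphism, and the slots are independent across different slots and different $v$ in both runs, so the laws agree. This fails only when $j/(2n^{-c}\dest(v))>1$, which the earlier Chernoff bound limits to probability $\delta$. Finally, online $F$ minus the constraints with unassigned variables equals $F[S']$. A constraint survives online iff every slot is assigned, meaning its low-degree endpoints have $H=1$ and its high-degree slots land on survivors; this is exactly the condition to lie in $I'[S']$. The maps $\degs'$ agree too, since offline $\deg_{I'}$ restricted to $S'$ counts exactly the slots landing on each surviving copy.

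Collecting the failure events, namely the degree-estimate claim ($\delta$), the resampling probability exceeding $1$ ($\delta$), and the space-bound termination ($\delta$), gives $\Out_{\off}=\Out_{\on}$ with probability greater than $1-3\delta$.
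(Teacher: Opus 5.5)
Your proposal is correct and follows essentially the same coupling as the paper. Both use a shared $H$ and $\C$, simulate the offline adversary by $\deg_{\Gest}(v)\,n^c/B$, and use identical uniform choices for copies of sampled low-degree variables. Both use the identity $2n^{-c}\cdot j/(2n^{-c}\dest(v)) = j/\dest(v)$ to make the high-degree slot assignments agree in law up to relabeling survivors (the paper's map $M_v$), and both charge the same three failure events of probability $\delta$ each.

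One small point to add: the simulated adversary must also make a legal choice in $(1\pm\eadv)\deg_I(v)$ when $\deg_{\Gest}(v)\le n^q$. If the estimate is out of range, fall back to any in-range value; by the same two-sided Chernoff bound this w.h.p.\ only happens when $\deg_I(v)<n^q$, so the variable is still classified low-degree. Conversely, your treatment of variables with small $\deg_I(v)$ but $\deg_{\Gest}(v)>n^q$ is slightly more careful than the paper's, whose union bound ranges only over $\deg_I(v)>n^q$.
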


\begin{proof}
    We start by taking inventory of the sources of randomness in each algorithm.
    
    Online, in \Cref{alg:streaming-estimator}, the randomness comes from sampling the hash function, each reservoir update to $\C$, and the constraint sampling that happens for $G$ and $\Gest$ respectively. Later, in \Cref{alg:streaming-reduction}, the randomness for low-degree variables comes from the constraint distribution over variable copies, and for high-degree variables there is randomness in the sampling of its simulated copies as well as the distribution of constraints over those copies. There is no randomness in the aggregation step. 

    Offline, in \Cref{alg:offline-estimator}, the randomness comes from the constraint distribution, the sampling of the hash function, the sampling of high-degree variable copies, and the final step uniformly selecting the constraints in $\C$ without replacement.
    
    We also give special attention to the adversary in \Cref{alg:offline-estimator} that determines the value of $\dest(v)$. Any property that holds for all decisions of the adversary, will also hold when the adversary is replaced by a randomized process that abides by the same rules. Because of this, we can also couple the decisions of the adversary with the randomness of the online algorithm. 

    We now describe the coupling in steps. In each step, we will reveal some randomness from both algorithms, couple that randomness, then fix the result. Each time we will get desired properties with certainty or with high probability that will help us prove the final equality $\Out_{\off} =\Out_{\on}$ holds with high-probability for some coupling.
    
    \begin{claim}\label{claim:couple-HC}
        For a given input $I = (V, (C_1, \dots, C_m))$ to both \Cref{alg:offline-estimator} and \Cref{alg:streaming-estimator}, there is a coupling between the randomness of $H_{\on},H_{\off}$ and $\C_{\on},\C_{\off}$ in both algorithms independent of all other internal randomness with the following properties:
        \begin{itemize}
            \item For $v \in V$, $H_{\on}(v) = H_{\off}(v)$.
            \item $\C_{\on}=\C_{\off}$.
        \end{itemize}
    \end{claim}
    \begin{proof}
        The hash function $H$ is sampled exactly the same way in both algorithms, and the resulting set of constraints in $\C$ will always be a uniform without-replacement sample of size $n^{1-c}$ from $\set{C_{i,l} \mid i \in [m], l \in [B]}$. Offline, this is made explicit, while in the online algorithm this ends up being the result due to of the nature of reservoir sampling. Since the hash function ($H$) and the reservoir sampling ($\C$) act the same in both algorithms, we can perfectly couple their randomness such that for $v \in V$, $H_{\on}(v) = H_{\off}(v)$ and $\C_{\on}=\C_{\off}$, proving the claim.
    \end{proof}

    We now fix the randomness of $H$ and $\C$ under the assumption of the two properties in the previous claim. Next, we couple the randomness of $\Gest$ in the online algorithm with the behavior of the adversary in the offline algorithm with the following claim.

    \begin{claim}\label{claim:couple-G-adv}
        For a given input $I = (V, (C_1, \dots, C_m))$ to both \Cref{alg:offline-estimator} and \Cref{alg:streaming-estimator} and the randomness previously fixed in \Cref{claim:couple-HC}, there is a coupling between the randomness of $\Gest$ online and the adversary offline independent of all other remaining internal randomness with the following properties that all hold with probability at least $1-\delta$.
        \begin{itemize}
            \item For all $v \in V$, in both algorithms, $v$ is determined to be either low-degree or high-degree, and this determination agrees between the two algorithms.
            \item $v \in V$ is high-degree (in both algorithms) if and only if $\deg_{\Gest}(v) > n^{q}$.
            \item For all $v \in S$ (online) and $v$ low-degree, $\dest(v)_{\off} = \degs[v] = \deg_I(v)$.
            \item For all high-degree $v \in V$, $\dest(v)_{\off} = \dest(v)_{\on} \in (1 \pm \epsilon)\cdot \deg_I(v)$.
        \end{itemize}
    \end{claim}
    \begin{proof}
        In the offline algorithm, the algorithm picks a $\dest(v)_{\off} \in (1 \pm \epsilon) \cdot \deg_I(v)$, then if $\dest(v)_{\off} \leq n^{q+c}/B$, it resets $\dest(v)_{\off} := \deg_I(v)$. As mentioned earlier, we couple an adversary by replacing it with a randomized process and then coupling the randomized process. This preserves all properties that hold for any behavior of the adversary while allowing us to couple properly. We define the random process as $\dest(v)_{\off} := \deg_{\Gest}(v) \cdot n^c/B$ if $\deg_{\Gest}(v) \cdot n^c/B \in (1 \pm \epsilon) \cdot \deg_I(v)$ and fail otherwise (pick any value in the range, we will ignore this case in the analysis). 

        First, we show that we fail with low probability. Take a variable $v$ with $\deg_I(v) > n^q$. Create an indicator variable for each pair $(C_{(i_1,l_1)},t_1) , \dots, (C_{(i_{B\deg_I(v)},l_{B\deg_I(v)})}, t_{B\deg_I(v)}))$ with $C_{i_j}^{t_j} = v$ and call them $X_1, \dots, X_{B\deg_I(v)}$. Let the variable $X_j$ be equal to $1$ when $(C_{(i_j,l_j)},t_j) \in \Gest$ and $0$ otherwise. The sum of these random variables is $\deg_{\Gest}(v)$. Each variable is independent with $\Ex[X_j] = n^{-c}$. We then use Chernoff (\Cref{lem:mult-chernoff}) to see:
        \[
        \Pr \paren*{\abs*{\deg_{\Gest}(v) - B\cdot\deg_I(v)\cdot n^{-c}} \geq \epsilon B\cdot \deg_I(v)\cdot n^{-c}} \leq 2\exp \paren*{-\frac{\epsilon^2 B \cdot\deg_I(v)\cdot n^{-c}}{3}}.
        \]

        Union bounding over all variables with $\deg_I(v) > n^q$ implies that the probability of a single failing variable is at most $2n\exp \paren*{-\epsilon^2 B\cdot n^{q-c}/3} \leq \delta$. We condition on none of the variables failing for the remainder of the proof.

        We see that offline, $v$ is high-degree if $\dest(v)_\off = \deg_{\Gest}(v) \cdot n^c > n^{q+c}$, or $\deg_{\Gest}(v) > n^q$. This is also the same condition for the online algorithm. In both cases, if $\deg_{\Gest}(v) \leq n^q$ then $v$ is low-degree. In addition, if $v \in S$ and $v$ is low-degree, then $\dest(v)_{\off} = \degs[v] = \deg_I(v)$, and if $v$ is high-degree then $\abs*{\deg_{\Gest}(v) - \deg_I(v)\cdot n^{-c}} \leq \epsilon\cdot \deg_I(v)\cdot n^{-c}$, so multiplying by $n^c$ gives $\dest(v)_{\off} = \dest(v)_{\on} \in (1 \pm \epsilon)\cdot \deg_I(v)$. Showing all 4 properties and concluding the proof of \Cref{claim:couple-G-adv}
    \end{proof}
    
    For the remainder of the proof, we now condition on the coupling in \Cref{claim:couple-G-adv} succeeding and fix the randomness of $\Gest$ and the offline adversary. We also note the following corollary which is necessary in bounding the space complexity in the previous section. 

    Consider a variable $v$ \enquote{sampled} if either $v \in S$ and $v$ is low-degree, or $v$ is high-degree. Consider a copy of a variable $(v,j)$ \enquote{Trevisan sampled} in the streaming algorithm if $(v,j) \in V'_{\on}$, and $(v,j)$ is \enquote{Trevisan sampled} in the offline algorithm if $(v,j) \in S'_{\off}$. We will show there is a one to one correspondence between variable copies and their adjacent constraints for some coupling. 

    We first discuss low-degree variables, in the online algorithm, a copy $(v,j)$ of a low-degree variable $v$ is in $V'_{on}$ exactly when $v \in S$, or $v$ is \enquote{sampled}. In the offline algorithm, a copy $(v,j)$ of low-degree variable $v$ is in $I'[S']$ exactly when $(v,j) \in S'$, so $H(v)=1$ and $v \in S$. This implies that copies of low-degree variables are \enquote{Trevisan sampled} exactly when their parent variable is \enquote{sampled} in both algorithms. Furthermore, each \enquote{sampled} low-degree $v$ has the same number of variable copies in each algorithm as $\degs[v] = \dest(v)_{\off}$ from \Cref{claim:couple-G-adv}. In addition to this, in both algorithms, each adjacent constraint $C_i$ to $v$ has $B$ copies that are uniformly and independently distributed over the copies of $v$. Because of this, we can couple these distributions and assume for the remainder of the proof that for any \enquote{Trevisan sampled} variable copy $(v,j)$, constraint $C_{(i,l)}$ and $t \in [k]$, $C_{(i,l)}^t = (v,j)$ online exactly when $C_{(i,l)}^t = (v,j)$ offline.

    Next, we discuss high-degree variables. From \Cref{claim:couple-G-adv}, we know that for high-degree variable $v$, we have $dest(v)_{\off} = \dest(v)_{on}$ (as these are the same we can simply refer to this value as $\dest(v)$). Because of this, both algorithms sample the exact same number of variable copies independently with probability $n^{-c}$. We can then couple this sampling so that for some $j \in [\dest(v)]$, the $j^{th}$ sample succeeds offline if and only if it succeeds online. Despite the coupling, the \enquote{Trevisan sampled} variables likely will have different labels because in the online algorithm, if we have $j$ \enquote{Trevisan sampled} variable copies for a vertex $v$, they will be labeled $(v,1), \dots, (v,j)$. However, offline, they will be labeled in their original range of $1$ to $\dest(v)$. We can adjust for this with a map $M_v:[j]\to[\dest(v)]$ such that $M_v(j')$ is the index of the $j'^{th}$ \enquote{Trevisan sampled} variable copy of $v$. This mapping is chosen such that variable copy $(v,j')$ online had its sampling correlated with $(v,M_v(j'))$ offline. After fixing the randomness of this part of the coupling, the only randomness left is the randomness of $G$ and the distribution of constraints over high-degree vertices. 

    We examine the process of high-degree constraint distribution in both algorithms. Offline, we simply distribute constraints uniformly at random over all variable copies, then do the offline sampling process which was fixed above. Online, we first sample a given constraint $C_{(i.l)}$ adjacent to high-degree vertex $v$ with probability $2n^{-c}$, then we count degrees and do the online sampling process fixed above, then we sample again with probability $j/(\dest(v)\cdot2n^{-c})$ where $j$ is the number of \enquote{Trevisan sampled} variable copies of $v$. Lastly, we pick a \enquote{Trevisan sampled} variable copy of $v$ uniformly at random to assign our constraint. 

    First, we can see that the value of $j$ has been fixed above, so for each constraint, we sample first with probability $2n^{-c}$, and if we succeed, we sample again with probability $j/(\dest(v)\cdot2n^{-c})$. This is the same as sampling from the start with probability $j/\dest(v)$. In addition, sampling with probability $j/\dest(v)$, then distributing over $j$ variable copies is the same as first distributing over $\dest(v)$ copies and then keeping only $j$ out of those $\dest(v)$ copies. We see that in both cases we end up with $j$ succeeding variables and each constraint is assigned to any one of these variables with probability $1/\dest(v)$. Since these distributions are identical, we can couple them so they have identical outcomes. Similar to before, we assume that for any \enquote{Trevisan sampled} variable copies $(v,j)_{\on}, (v,M_v(j))_{\off}$, constraint $C_{(i,l)}$ and $t \in [k]$, $C_{(i,l)}^t = (v,j)$ online exactly when $C_{(i,l)}^t = (v,M_v(j))$ offline. 

    This gives an isomorphism between $I'=(V',F)$ online and $I'[S']$ offline. The map is simply $(v,j) \to (v,j)$ for low-degree $v$, and $(v,j) \to (v,M_v(j))$ online. Constraints are simply mapped by $C_{(i,l)} \to C_{(i,l)}$. We know this is an isomorphism because the variable mapping is a bijection, the constraint mapping is a bijection, and the isomorphism preserves adjacency.
    
    This also implies that $\degs'_{\on} = \degs'_{\off}$. Since we already know that $\C_{\on}=\C_{\off}$ from \Cref{claim:couple-HC}, all inputs to the aggregation function are equivalent. Since the aggregation function works the same in both cases, we have $\Out_{\on} = \Out_{\off}$. For this to be true, we had to condition on having the proper space bound, not terminating because $j/(\dest(v)\cdot 2n^c) > 1$, and the probability bound in \Cref{claim:couple-G-adv}. Each of these events fail with probability at most $\delta$ giving us a total failure probability of at most $3\delta$, proving \Cref{lem:coupling}.
\end{proof}

\begin{theorem}[Restating of \Cref{thm:main}]
    For a $\CSP(\mc{F})$, let $\alpha_{\mc{F}}$ denote the integrality gap of the corresponding \BasicLP{}. There is a single-pass streaming algorithm that, with probability $1-\delta$, $(\alpha_{\mc{F}} - \epsilon)$-approximates $\maxCSP{(\mc{F})}$ using $O(n^{1-c})$ space. Where $c \geq 1/\exp(\exp(\exp(\poly(k{|\Sigma|}/\epsilon)))$. 
\end{theorem}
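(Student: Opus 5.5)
The plan is to assemble the theorem from the pieces already established: the reduction to linear-size input (\Cref{claim:linear-edges}), the accuracy of the offline estimator (\Cref{lem:offline-out}), the coupling between the offline and streaming estimators (\Cref{lem:coupling}), and the space analysis of \Cref{alg:streaming-estimator}. The algorithm is exactly \Cref{alg:streaming-estimator}, run with the constants fixed in the ``Constants'' subsection. Before running it, the following wrappers are applied:
\begin{itemize}
\item the observation making all constraints $k$-ary and nontrivial;
\item the $O(\log n)$ parallel guesses of $m$;
\item the subsampling of \Cref{claim:linear-edges}, which makes $m = O(n/\epsilon^2)$ at the cost of a $(1-\epsilon)$ factor.
\end{itemize}

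First I would bound the failure probability. By \Cref{lem:offline-out}, with probability at least $1-3\delta$ the offline output satisfies $(\alpha_{\mc F}-4\epsilon|\Sigma|^k)\val{I}\le \valest_{\off}\le \val{I}$. By \Cref{lem:coupling}, $\Out_{\on}=\Out_{\off}$ except with probability $3\delta$. Since $\valest$ is the same deterministic rescaling of $\Out$ in both algorithms, the streaming estimate inherits the same guarantee. A union bound gives total failure probability $O(\delta)$, and the linear-edges step adds at most $e^{-n}$. Replacing $\delta$ by $\delta/7$ then yields success probability $1-\delta$.

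Second, I would adjust the approximation ratio. Running everything with $\epsilon' = \epsilon/(8|\Sigma|^k)$ in place of $\epsilon$ turns $\alpha_{\mc F}-4\epsilon'|\Sigma|^k$, multiplied by the $(1-\epsilon')$ loss from subsampling, into at least $\alpha_{\mc F}-\epsilon$, using $\alpha_{\mc F}\le 1$. This rescaling changes the constants only by factors polynomial in $|\Sigma|^k$. Since $r\le\exp(\poly(kB|\Sigma|/\epsilon))$ with $B=|\Sigma|^{2k}/\epsilon$, we get $\Tmax\le (kB/\rho)^{r}$, which is triply exponential in $\poly(k|\Sigma|/\epsilon)$. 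Hence $c=\epsilon/\Tmax\ge 1/\exp(\exp(\exp(\poly(k|\Sigma|/\epsilon))))$, as claimed.

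Third, I would bound the space. The analysis preceding the coupling shows that w.h.p.:
\begin{itemize}
\item $S$, $\degs$, $G$, $\Gest$ and $\C$ each use $O(n^{1-c})$ space;
\item the high-degree copies in $V'$ number $O(n^{1-c})$, using the fourth item of \Cref{claim:couple-G-adv};
\item $|F|=O(n^{1-c/3})$.
\end{itemize}
Together with the $O(\log n)$ overhead from guessing $m$, the total is $O(n^{1-c/3}\log n)$. Renaming $c/3$ (minus an arbitrarily small amount to absorb the $\log n$) as the new $c$ gives $O(n^{1-c})$ with the same triple-exponential lower bound.

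The main obstacle is the bookkeeping of dependencies among the constants. \Cref{lem:offline-var} requires $2\Tmax^2 n^{q+c+2c\Tmax}/n^{1-c}\le\delta\epsilon^2$, so the choices $q=\epsilon$ and $c=\epsilon/\Tmax$ must keep the exponent $q+3c+2c\Tmax = O(\epsilon)$ below $1$. Likewise, \Cref{claim:couple-G-adv} requires $n^{q-c}\gg\log n$, which holds because $q>c$. I would verify these inequalities explicitly for $\epsilon'$, and otherwise cite the lemmas as black boxes.
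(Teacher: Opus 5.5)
Your proposal is correct and follows the same route as the paper. It combines \Cref{lem:offline-out} with the coupling of \Cref{lem:coupling} by a union bound (total failure probability $6\delta$), transfers the guarantee to the streaming output because $\valest$ is the same rescaling of $\Out$ in both algorithms, and then reparameterizes the constants. Your explicit bookkeeping — running with $\epsilon'=\epsilon/(8|\Sigma|^k)$, rescaling $\delta$, and renaming $c/3$ to absorb the $O(n^{1-c/3}\log n)$ space coming from $|F|$ and the guesses of $m$ — simply fills in what the paper compresses into ``a reparameterization of the constants.''
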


\begin{proof}
    We prove our final theorem by combining \Cref{lem:coupling} with \Cref{lem:offline-out}. If $\Out_{\on} = \Out_{\off}$ with probability $1-3\delta$, then by union bound we can see that with probability $1-6\delta$, the exact statement of \Cref{lem:offline-out} applies to the online algorithm as well, namely
    \[
    (\alpha_{\mc{F}}-4\epsilon \card{\Sigma}^k) \cdot \val{I} \leq \valest \leq \val{I}.
    \]
    The theorem follows from a reparameterization of the constants. It is worth noting that we keep $c \geq 1/\exp(\exp(\exp(O(k{|\Sigma|}/\epsilon))))$, so this does not another layer to the exponent. We also note that if we consider $k$ and $|\Sigma|$ to be constant, then we get a bound of $c \geq 1/\exp(\exp(\poly(1/\epsilon))$.
\end{proof}

\printbibliography

\appendix
\section{General Lemmas About Statistics}

\begin{proposition}[Chebyshev's Inequality]\label{lem:chebyshev}
    For any random variable $X$ with mean $\mu$ and variance $\sigma^2$ and constant $t > 0$, it holds that
    \[
    \Pr \paren*{\abs*{X-\mu} \geq t} \leq \frac{\sigma^2}{t^2}.
    \]
\end{proposition}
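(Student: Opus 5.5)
The plan is to bound $\Pr(X-\mu \geq t)$ and $\Pr(\mu - X \geq t)$ together by a single application of Markov's inequality to a nonnegative random variable. I will take $Y := (X-\mu)^2$, which is nonnegative and satisfies $\Ex[Y] = \sigma^2$ by the definition of variance. This assumes $\sigma^2$ is finite; otherwise the bound is vacuous and there is nothing to prove.

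First, I observe that the events $\set{\abs{X-\mu} \geq t}$ and $\set{Y \geq t^2}$ are identical. This holds because $t>0$ and squaring is monotone on nonnegative reals. Second, I will prove Markov's inequality for $Y$ at level $a = t^2 > 0$, since it is a one-liner. The pointwise inequality
\[
a\cdot \one[Y \geq a] \leq Y
\]
holds because $Y \geq 0$. Taking expectations gives $a\Pr(Y\geq a) \leq \Ex[Y]$. Substituting $a = t^2$ and $\Ex[Y]=\sigma^2$ then yields $\Pr(\abs{X-\mu}\geq t) \leq \sigma^2/t^2$.

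There is no genuine obstacle in this argument. The only points needing care are that $t>0$, so that dividing by $t^2$ is legal and the two events coincide, and that nonnegativity of $Y$ is what justifies the pointwise indicator bound.
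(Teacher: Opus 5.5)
Your proof is correct and complete. It is the standard argument: apply Markov's inequality to the nonnegative variable $(X-\mu)^2$ at level $t^2$, using the pointwise bound $t^2\cdot\one[(X-\mu)^2\geq t^2]\leq (X-\mu)^2$. The paper gives no proof and lists Chebyshev's inequality in the appendix as a standard tool, so there is no different route to compare against.
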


\begin{proposition}[Multiplicative Chernoff Bound]\label{lem:mult-chernoff}
Let $X_1, \dots, X_n$ be independent random variables taking values in $[0,1]$, 
and let $X = \sum_{i=1}^n X_i$ with $\mu = \mathbb{E}[X]$. 
For any $0 < \delta < 1$,
\[
\Pr \paren*{\abs{X - \mu} \geq \delta \mu} \leq 2 \exp\paren*{-\frac{\delta^2}{3} \mu}.
\]
\end{proposition}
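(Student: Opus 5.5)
We prove the bound by the standard moment-generating-function (Chernoff) method, treating the upper and lower tails separately and combining them with a union bound, which accounts for the leading factor $2$. Write $p_i = \Ex[X_i]$, so that $\mu = \sum_i p_i$. For the upper tail, fix $\lambda > 0$ and apply Markov's inequality to $e^{\lambda X}$:
\[
\Pr(X \geq (1+\delta)\mu) \leq e^{-\lambda(1+\delta)\mu}\,\Ex\bigl[e^{\lambda X}\bigr] = e^{-\lambda(1+\delta)\mu}\prod_{i} \Ex\bigl[e^{\lambda X_i}\bigr].
\]
The product form uses independence.

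Since each $X_i$ takes values in $[0,1]$, convexity of $x \mapsto e^{\lambda x}$ gives the pointwise bound $e^{\lambda x} \leq 1 + (e^\lambda - 1)x$ on $[0,1]$. Taking expectations and using $1+y \leq e^y$,
\[
\Ex\bigl[e^{\lambda X_i}\bigr] \leq 1 + (e^\lambda - 1)p_i \leq \exp\bigl((e^\lambda-1)p_i\bigr).
\]
Multiplying over $i$ and choosing $\lambda = \ln(1+\delta)$ yields
\[
\Pr(X \geq (1+\delta)\mu) \leq \left(\frac{e^{\delta}}{(1+\delta)^{1+\delta}}\right)^{\mu}.
\]
For the lower tail we apply the same argument to $e^{-\lambda X}$, using $e^{-\lambda x} \leq 1 - (1-e^{-\lambda})x$ on $[0,1]$, and choose $\lambda = -\ln(1-\delta)$. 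This gives
\[
\Pr(X \leq (1-\delta)\mu) \leq \left(\frac{e^{-\delta}}{(1-\delta)^{1-\delta}}\right)^{\mu}.
\]

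The remaining work is elementary calculus on $(0,1)$. For the upper tail we show $f(\delta) = \delta - (1+\delta)\ln(1+\delta) + \delta^2/3 \leq 0$. We have $f(0)=0$ and $f'(\delta) = -\ln(1+\delta) + 2\delta/3$. The derivative also vanishes at $0$, is concave, and is still negative at $\delta=1$ since $\ln 2 > 2/3$. Hence $f' \leq 0$ on $[0,1]$, and therefore $f \leq 0$.

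For the lower tail we need $-\delta - (1-\delta)\ln(1-\delta) \leq -\delta^2/2$. This follows from the series $(1-\delta)\ln(1-\delta) = -\delta + \sum_{j\geq 2} \frac{\delta^j}{j(j-1)}$, whose first correction term is $\delta^2/2$. The lower-tail bound $e^{-\delta^2\mu/2}$ is stronger than $e^{-\delta^2\mu/3}$.

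Summing the two tails gives $2\exp(-\delta^2\mu/3)$, as claimed. The only mildly delicate point is the upper-tail calculus inequality, specifically checking the endpoint $\ln 2 \geq 2/3$. The rest is routine.
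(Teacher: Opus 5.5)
Your moment-generating-function argument is the standard textbook proof. The paper states this proposition in its appendix without proof, as a known tool, so there is no different route to compare against. The chord bounds from convexity of $x\mapsto e^{\pm\lambda x}$ on $[0,1]$, the choices $\lambda=\ln(1+\delta)$ and $\lambda=-\ln(1-\delta)$, the series identity $(1-\delta)\ln(1-\delta) = -\delta + \sum_{j\ge 2}\delta^j/(j(j-1))$ for the lower tail, and the final union bound are all correct. The degenerate case $\mu=0$ is trivial.

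There is one wrong justification, at the step you yourself flagged as delicate. The function $f'(\delta) = 2\delta/3 - \ln(1+\delta)$ is \emph{convex}, not concave, because $f''(\delta) = 2/3 - 1/(1+\delta)$ is increasing. Convexity is the property you actually need, since it gives
$f'(\delta) \le (1-\delta)f'(0) + \delta f'(1) = \delta\,(2/3 - \ln 2) \le 0$ on $[0,1]$. A concave function with $f'(0)=0$ and $f'(1)<0$ lies \emph{above} its chord and could be positive in between, so the inference as written does not follow. An equivalent check: $f''<0$ on $(0,1/2)$ and $f''>0$ on $(1/2,1)$, so $f'$ first decreases from $0$ and then increases only up to $f'(1)<0$, never becoming positive. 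With ``concave'' replaced by ``convex,'' your proof is complete.
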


\begin{proposition}[Additive Chernoff]\label{lem:add-chernoff}
Let $X_1,\dots,X_N$ be independent Bernoulli random variables and set
$X:=\sum_{i=1}^N X_i$ with $\mu:=\mathbb{E}[X]$. Then for any $\lambda>0$,
\[
\Pr\!\left[\,|X-\mu|\ge \lambda\,\right]
\;\le\; 2\exp\!\left(-\frac{\lambda^2}{2\mu + \lambda}\right).
\]
\end{proposition}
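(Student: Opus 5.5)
We prove the two tails separately with the exponential-moment (Chernoff) method and then take a union bound, which produces the factor $2$. Trivial cases are dispatched first. If $\mu=0$, then $X=0$ almost surely and there is nothing to prove. If $\lambda>\mu$, the lower tail event $X\le\mu-\lambda<0$ is empty. So assume $\mu>0$ and write $p_i=\Pr(X_i=1)$.

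\textbf{Step 1 (MGF bound).} For any real $s$, independence and $1+y\le e^y$ give
\[
\Ex[e^{sX}]=\prod_i\bigl(1+p_i(e^s-1)\bigr)\le \exp\bigl(\mu(e^s-1)\bigr).
\]

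\textbf{Step 2 (upper tail).} Markov's inequality applied to $e^{sX}$ with $s=\ln(1+x)$, where $x=\lambda/\mu$, gives the standard bound
\[
\Pr(X\ge \mu+\lambda)\le \exp\bigl(-\mu\, h(x)\bigr),\qquad h(x)=(1+x)\ln(1+x)-x.
\]
It then suffices to prove the elementary inequality $h(x)\ge x^2/(2+x)$ for $x\ge 0$. Indeed, this yields
\[
\mu h(\lambda/\mu)\ge \frac{\lambda^2/\mu}{2+\lambda/\mu}=\frac{\lambda^2}{2\mu+\lambda}.
\]

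\textbf{Step 3 (the elementary inequality).} Let $g(x)=h(x)-x^2/(2+x)$, so $g(0)=0$. Its derivative is
\[
g'(x)=\ln(1+x)-\frac{x^2+4x}{(x+2)^2},
\]
which also vanishes at $0$. Differentiating once more, it suffices to show
\[
\frac{1}{1+x}\ge \frac{8}{(x+2)^3}.
\]
This is equivalent to $(x+2)^3\ge 8(1+x)$, which holds because $(x+2)^3=x^3+6x^2+12x+8$. Integrating twice from $0$ gives $g\ge 0$ on $[0,\infty)$.

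\textbf{Step 4 (lower tail).} Take $s=\ln(1-x)$ with $x=\lambda/\mu\in(0,1]$; the endpoint $x=1$ is handled as a limit, or directly, since then $\Pr(X\le 0)=\prod_i(1-p_i)\le e^{-\mu}$. This gives
\[
\Pr(X\le\mu-\lambda)\le\exp\bigl(-\mu\bigl[(1-x)\ln(1-x)+x\bigr]\bigr).
\]
By the same two-derivative comparison, $(1-x)\ln(1-x)+x\ge x^2/2$: both sides and their first derivatives vanish at $0$, and the second derivatives satisfy $1/(1-x)\ge 1$. Hence the lower tail is at most
\[
\exp\bigl(-\lambda^2/(2\mu)\bigr)\le \exp\bigl(-\lambda^2/(2\mu+\lambda)\bigr).
\]

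\textbf{Step 5.} Summing the two tail bounds gives the claimed inequality. The only real obstacle is choosing a rational lower bound for $h$ that is simple enough to verify, which is Step 3. The comparison with $x^2/(2+x)$ works out cleanly, and it is weaker than Bernstein's $x^2/(2+2x/3)$, which is what the stated form requires.
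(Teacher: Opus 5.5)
Your proof is correct and complete. The paper lists this proposition in its appendix as a standard fact without proof, so there is no competing argument to compare against. Your route is the textbook derivation: the MGF bound $\exp(\mu(e^s-1))$, then the elementary inequalities $(1+x)\ln(1+x)-x \ge x^2/(2+x)$ and $(1-x)\ln(1-x)+x \ge x^2/2$, each checked by a second-derivative comparison, then a union bound over the two tails. Your handling of the edge cases $\mu=0$, $\lambda>\mu$ and $\lambda=\mu$ is also sound.
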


\section{$\BasicLP$ for $\CSP$s}\label{sec:basic-lp}

In this section, we formally describe the $\BasicLP$ that we are approximating. Though the details of the LP can be ignored for the purposes of this paper, we describe it here and give some intuition behind the formulation as context. We use the formulation from \cite{FMW25dichotomy}.

For an instance $I=(V,(C_1, \dots, C_m))$ of $\CSP(\mc{F})$, let constraint $C_i$ have variables $v_{i,1}, \dots v_{i,k} \in V$ and predicate $f_i \in \mc{F}$. We define the $\BasicLP$ relaxation for instance $I$ as follows with variables $(x_{v,\sigma})_{v \in V,\sigma \in \Sigma}$ and $(z_{i,b})_{i \in [m],b\in \Sigma^k}$.

\[
\begin{alignedat}{3}
\text{maximize} \quad 
& \frac{1}{m}\sum_{i \in [m]} \sum_{b \in \Sigma^k} f_i(b)\cdot z_{i,b} 
& & & \\
\text{s.t.} \quad 
& \sum_{\sigma \in \Sigma} x_{v,\sigma} = 1 
&\quad & \forall v \in V \\
& \sum_{b \in \Sigma^k} \one \{b_j=\sigma\} \cdot z_{i,b} = x_{v_{i,j},\sigma} 
&\quad & \forall i \in [m], j \in [k], \sigma \in \Sigma \\
& x_{v,\sigma} \ge 0 
&\quad & \forall v \in V, \sigma \in \Sigma \\
& z_{i,b} \ge 0 
&\quad & \forall i \in [m], b \in \Sigma^k
\end{alignedat}
\]

We describe the intuition behind the linear program. First, the variable $x_{v,\sigma}$ represents a fractional assignment of the value $\sigma$ to variable $v$. Similarly, the variable $z_{i,b}$ represents a fractional assignment of the vector $b\in \Sigma^k$ to constraint $C_i$, in other words, a fractional assignment of all adjacent vertices to constraint $C_i$. The conditions on the LP are that all variables must be non-negative, the fractional assignment to any variable must add to $1$, and the fractional assignments of the constraints must agree with the fractional assignments of their neighbors. By \enquote{agree}, it means that if constraint $C_i$ is adjacent to variable $v$, then the sum of the fractional assignments to $C_i$ that assign $v$ to a given value $\sigma$ must be equal to the variable's fractional assignment for $\sigma$. As a sanity check, it is possible to prove that $\sum_{b\in \Sigma^k} z_{i,b} = 1$. We simply sum over all values of $\sigma$ for some $j\in [k]$, and this sum equals $\sum_{\sigma \in \Sigma} x_{v_{i,j},\sigma} = 1$.

\section{Reservoir Sampling}\label{sec:reservoir-sampling}

Reservoir sampling is a technique first described in \cite{vitter1985random} for sampling a set of elements uniformly without replacement from a stream. The algorithm creates a \enquote{reservoir} of size $s$, and populates the reservoir with the first $s$ elements of the stream along with a counter or the total number of elements seen. Then, for the $i^{th}$ element in the stream after the first $s$, with probability $s/i$, we replace a random index in the reservoir with our new element. This procedure maintains the property that after $n$ values are seen, any given value is in the reservoir with probability $s/n$. This is equivalent to uniformly sampling without replacement. The functions below are implementations of this process.

\begin{algorithm}
    \caption{$\InitReservoir(s)$} \label{alg:init-reservoir}
    \KwIn{Integer size $s$}
    \KwOut{Reservoir Data Structure $(S,\mathsf{count})$ containing an array of size $s$ and a counter $\mathsf{count}$}
    Initialize $S \gets $ array of size $s$. \\ $\mathsf{count} \gets 1$. \\
    \KwRet{$(S,\mathsf{count})$}
\end{algorithm}

\begin{algorithm}
    \caption{$\UpdateReservoir(R,e)$} \label{alg:update-reservoir}
    \KwIn{Reservoir $R = (S,\mathsf{count})$, and edge $e$}
    \KwOut{Updated Reservoir}
    \If{$\mathsf{count} \leq |S|$}{
        $S[\mathsf{count}] = e$
    }
    \Else{
        \WithProb{$|S|/\mathsf{count}$}{
            Select random $i$ from $1$ to $|S|$. \\
            $S[i] \gets e$ \\
        }
    }
    $\mathsf{count} \gets \mathsf{count}+1$\\
    \KwRet{$(S,\mathsf{count})$}
\end{algorithm}
\end{document}